\useunder{\uline}{\ul}{}
\DeclarePairedDelimiter\ceil{\lceil}{\rceil}
\newcommand{\ts}{\textsuperscript}
\newcommand{\RR}{\mathbb{R}}
\newcommand{\PP}{\mathbb{P}}
\newcommand{\cI}{\mathcal{I}}
\theoremstyle{plain}
\newtheorem{theorem}{Theorem}[section]
\newtheorem{definition}[theorem]{Definition}
\newtheorem{proposition}[theorem]{Proposition}
\theoremstyle{remark}
\newtheorem{remark}[theorem]{Remark}
    \def\ps@pprintTitle{%
      \let\@oddhead\@empty
      \let\@evenhead\@empty
      \let\@oddfoot\@empty
      \let\@evenfoot\@oddfoot
    }
\renewcommand{\paragraph}{\@startsection{paragraph}{4}{0ex}%
    {-3.25ex plus -1ex minus -0.2ex}%
    {1.5ex plus 0.2ex}%
    {\normalfont\normalsize\itshape}}
\begin{document}

\title{Applications of Signature Methods to Market Anomaly Detection}

\begin{abstract}

\end{abstract}


\author[add1,add2]{Erdinc Akyildirim}
\ead{erdinc.akyildirim@math.ethz.ch}

\author[add1]{Matteo Gambara\corref{cor1}}
\ead{matteo.gambara@math.ethz.ch}


\author[add1]{Josef Teichmann\corref{cor1}}
\ead{josef.teichmann@math.ethz.ch}

\author[add1]{Syang Zhou}
\ead{syang.zhou@math.ethz.ch}

\address[add1]{Department of Mathematics, ETH, Zurich, Switzerland}

\address[add2]{Department of Banking and Finance, University of Zurich, Zurich, Switzerland}


\begin{keyword}
Machine Learning \sep Signature \sep Randomized Signature \sep Market Fraud \sep Reservoir Computing \sep Stock market \sep Cryptocurrency \sep Imbalanced data \sep Anomaly detection \\
{\it JEL:} C21 \sep C22 \sep G11 \sep G14 \sep G17 
\end{keyword}

\begin{abstract}
    Anomaly detection is the process of identifying abnormal instances or events in data sets which deviate from the norm significantly. 
    In this study, we propose a signatures based machine learning algorithm to detect rare or unexpected items in a given data set of time series type. We present applications of signature or randomized signature as feature extractors for anomaly detection algorithms; additionally we provide an easy, representation theoretic justification for the construction of randomized signatures. 
    Our first application is based on synthetic data and aims at distinguishing between real and fake trajectories of stock prices, which are indistinguishable by visual inspection. 
    We also show a real life application by using transaction data from the cryptocurrency market. In this case, we are able to identify pump and dump attempts organized on social networks with $F_1$ scores up to 88\% by means of our unsupervised learning algorithm, thus achieving results that are close to the state-of-the-art in the field based on supervised learning\footnote{The full code used for the anomaly detection case study on cryptocurrency data can be found under \href{https://gitlab.ethz.ch/Syang/reservoir-anomaly-detection}{https://gitlab.ethz.ch/Syang/reservoir-anomaly-detection}.}.\\
\end{abstract}

\maketitle

\newpage 
 
\section{Introduction}

Anomaly detection (or outlier detection) is defined as the task of detecting abnormal instances, which are the rare items, events or observations deviating significantly from the majority of the data. While these instances are called outliers (anomalies), the normal instances are called inliers. Anomaly detection is a fundamental research problem that has been investigated by researchers from diverse research fields and application areas. Anomaly detection can be made manually by searching through whole data clouds to diagnose the problem, but clearly this is a long and labour-intensive process. Anomaly detection often appears in the context of uncertainty, i.e.~absence, principal or not, of knowledge on the data generating process.

Hence, over time, a plethora of anomaly detection techniques ranging from simple statistical techniques to complex machine learning algorithms has been developed for certain application areas such as fraud detection in financial transactions (\cite{west2016intelligent}), fault detection in production (\cite{miljkovic2011fault}), intrusion detection in a computer network (\cite{sabahi2008intrusion}), etc. Some of the well known statistical methods such as z-score, Tukey method (Interquartile Range) or Gaussian Mixture models can be useful for the initial screening of outliers. Although these statistical or econometric anomaly detection methods have been well rooted in the literature (we refer the reader to \cite{chandola2009anomaly} for an extensive review) dating back to \cite{edgeworth1887xli}, many of them have failed to provide sufficient performance and accuracy in the last decade. This is mainly in view of big data collected from various sources such as financial transactions, health records, and surveillance logs etc. Nowadays high-volume, high-velocity, and high-variety data sets demand cost-effective novel data analytics for decision-making and to infer useful insights\footnote{ \href{https://www.gartner.com/en/information-technology/glossary/big-data}{https://www.gartner.com/en/information-technology/glossary/big-data}}.

Due to their inherent scaling properties machine learning algorithms take up the challenge of big data and seem to improve the success of anomaly detectors. Machine learning models can be configured to create complex anomaly detection systems with benefits such as higher performance, time saving, overall system stability, and cost saving in the long run. 
Machine learning algorithms for anomaly detection are essentially classified under two major paradigms: supervised or unsupervised, which depend on data at hand.

Under the supervised paradigm, data arrive labelled as anomaly and normal, that is all outliers are known in advance. Some of the frequently used machine learning algorithms under this setting are support vector machine (\cite{suykens1999least}), $k$-nearest neighbors (\cite{fix1989discriminatory}), Bayesian networks (\cite{pearl1985bayesian}), and decision trees (\cite{blfjors1984classification}). 

Although the supervised setting is ideal for the efficient application of ML algorithms, it is not common to obtain this structured data in most of real life applications. 
Under the unsupervised paradigm, in turn, the algorithm is not provided with any pre-labelled points in the training data hence the algorithm itself must discover what is anomalous and what is normal. Some of the frequently used machine learning algorithms in this case are neural networks (\cite{ivakhnenko1967cybernetics}), self-organizing maps (\cite{kohonen1982self}), expectation-maximization meta-algorithm (\cite{dempster1977maximum}), one-class support vector machine (\cite{choi2009least}). In our study we employ Minimum Covariance Determinant (MCD) (\cite{Rousseeuw1985}) and isolation forests (\cite{TonyLiu2008}), which are also unsupervised learning algorithms.

As mentioned above, the anomaly detection under the unsupervised paradigm (of unlabelled data) is a challenging task as it requires the algorithm to detect previously unseen rare objects or events without any prior knowledge about these. 
In this study, we consider time series data and we employ the powerful signature theory from rough path theory \cite{Lyons2014} as well as its randomized counterpart studied in \cite{sigSAS} together with isolation forests and robust covariance techniques to detect anomalies under the unsupervised paradigm.

Signatures provide a versatile encoding of the information of a continuous time bounded variation path. This can be used to construct a regression basis for non-linear, continuous maps on path space. We consider classification maps (in a relaxed way) as continuous maps on such path spaces. We demonstrate the application of (randomized) signature methodology for market anomaly detection in the financial markets, however, the same line of thought can also be applied in the other fields. 

Let us reflect a moment what an anomaly could mean in case of market data: as it is well known, the efficient market hypothesis (EMH) (\cite{fama1970}, \cite{sharpe1970efficient}, \cite{fama1976efficient}) depends on the random walk theory which suggests that price changes are random of a martingale type and thus cannot be predicted with certainty given all the currently available information, and the available information is optimally used by market participants. In other words, EMH states that stock prices (and their past) already reflect all available information and expectations about the value of the firm hence investors cannot earn excess profits using any past, public or private information which they obtain (legally). EMH has been one of the most widely researched and tested hypothesis in the last decades. It has been found that there are many instances where a security or group of securities performs contrary to efficient markets and these instances are viewed as anomalies. Financial market anomalies can occur in different extremes ranging from fraud or insider trading to calendar anomalies such as weekend, turn of the month, or January effects. There are also other type of technical and fundamental anomalies in between such as low price to book, high dividend yield, moving averages or momentum. 

The methodology developed in this paper allows to detect many different kinds of financial market anomalies mentioned above. 
In particular, we show the application of (randomized) signature methodologies for market anomaly detection by using simulated data (geometric Brownian motions with artificial changes) and also real transaction data from the cryptocurrency markets. 

One of the closest papers to our research in the mathematical framework is \cite{cochrane2020anomaly} which provides a data-driven notion of a distance (called conformance) between an arbitrary stream of data and the corpus. 
They combine conformance with signatures to analyse anomalies in streamed data. They achieve area
under the curve (AUC) score of more than 98\% (89\%) in the PenDigits (marine vessel traffic) data sets. 
They also show that their approach outperforms a state-of-the-art shapelet method for 19 out of 28 data sets for the univariate (non-financial) time series from the UEA \& UCR time series repository. 
Results from our methodology based on randomized signatures are mainly in line with their outcomes which shows the efficiency of the signatures based methodologies. 
On the other hand, from the finance literature, \cite{LaMorgia2021} is a very closely related work to our research. They investigate pump and dump and crowd pump type of manipulations organized by communities from the social media. They follow these groups for more than three years and collect around 900 manipulation cases. They show that their machine learning model is able to detect a pump and dump in 25 seconds from the moment it starts with an $F_1$-score of 94.5\%. We also employ this unique dataset of confirmed pump and dumps released by \cite{LaMorgia2021} to the literature. 
Although we achieve $F_1$ scores up to 88\%, the main advantage of our approach is that it is an unsupervised machine learning algorithm which does not require labelled data set, hence it can be applied in many different frameworks without data constraints.

\medskip

Our contribution to the literature is manifold: first, to the best of our knowledge, this is the first paper which employs randomized signatures and truncated signatures for anomaly detection in a financial market context. Second, we provide a new proof that qualifies randomized signatures as universal approximators in the space of continuous functions (under specific conditions). This complements the contribution of \cite{sigSAS}, which is more inspired by Reservoir Computing. Third, we contribute to the scarce literature on the detection of pump and dump schemes in the exponentially growing cryptomarkets. Moreover, our results reveal the improving market efficiency in the crypto ecosystem as the market manipulations can be detected with high accuracy rates in advance with our unsupervised machine learning algorithm soley relying on publically available data. Finally, the small differences between the results obtained by truncated signatures and randomized signatures empirically prove the robustness of the randomized signature method which can be applied in higher dimensional problems efficiently.

\medskip

The rest of the paper is organized as follows: Section \ref{sec:methods} explains the methodologies that we employ to detect the market anomalies. Section \ref{sec:example on synthetic data} presents the numerical applications using simulated and cryptocurrency market data together with the empirical findings and, eventually, Section \ref{sec:conclusion} concludes. 

\section{Methodology}\label{sec:methods}

The information contained in a time series can be encoded in non-linear maps on path space. Having several values of such a map enables its learning. In order to design learning algorithms efficient (linear or non-linear) regression bases have to be constructed. Signatures provide one universal solution for this problem, randomized signatures an efficient approximatively universal solution (with regularization properties).

\subsection{Signatures}

In this subsection, we provide some basics about signatures or signature transforms as well as the foundations of the theory of randomized signature. While a gentle introduction is also given by \cite{primerSignatures}, we also refer the reader to  \cite{Lyons2007} and \cite{friz_victoir_2010} for a detailed discussion on the signatures and rough paths.

In this section we shall introduce different ways to represent the information of paths, i.e.~functions $\gamma:I \to V$, where $I$ is a compact interval in $\RR$ and $V$ is a Hilbert space with an associated norm stemming from a scalar product $\|\cdot\|$. For simplicity we assume $V$ to be finite dimensional here, so tensor products coincide with their algebraic counterparts and can be used without ambiguity.

The \emph{total variation} of a path $\gamma$ is defined as $\|\gamma\|_{TV} = \sup_{\cI} \sum_{(t_1, \dots, t_k)\in \cI} \|\gamma(t_i) - \gamma(t_{i-1})\|$, where the supremum is taken over all partitions\footnote{That is, increasing sequences of ordered (time) indices such that $\cI = \{(t_0, \dots, t_r) \,|\, 0 = t_0 < t_1 < \dots < t_r = T \}$.} of $I$, called $\cI$.

\noindent Note that our path shall always start at $0$.

\begin{definition}[Bounded variation]
    A continuous path $\gamma: I \to V$ is said to be of \emph{bounded variation} if $\|\gamma\|_{TV} < +\infty$. 
    The set of all bounded variation paths on $V$ is denoted by $\operatorname{BV}(V) = \{\gamma:I\to V \,|\, \|\gamma\|_{TV} < +\infty\}$.
\end{definition}

Analogously, we can define the
\emph{p-variation} of a path:

\begin{definition}[$p$-variation]
    Let $p\geq 1$ be a real number and $\gamma:I \to V$ be a continuous path.
    The \emph{p-variation} of $\gamma$ on the interval $I$ is defined as 
    \begin{equation*}
        \|\gamma\|_p = \left( \sup_\cI \sum_{(t_1, \dots, t_k)\in \cI} \|\gamma(t_i) - \gamma(t_{i-1})\|^p \right)^{1/p}.
    \end{equation*}
\end{definition}

\noindent Note that for a fixed continuous path $\gamma$ the function $p \mapsto \|\gamma\|_p$ is non-increasing, hence, if $q \geq p$, then $\|\gamma\|_q \leq \|\gamma\|_p$ and, thus, any finite $p$-variation path has also finite $q$-variation.

\bigskip

One of the most remarkable properties of signatures is being universal approximators. To clarify this from two perspectives let us first define tensor algebras. For any integer $n$ the $n$-th power tensor of $V$ is defined as $V^{\otimes n} = V \otimes \dots \otimes V$. 
For example, for $V=\RR^d$, it holds that $V^{\otimes 2} = \RR^d \otimes \RR^d \simeq \RR^{d \times d}$.
By convention, $V^{\otimes 0} = \RR$.
\begin{definition}[Tensor algebra]
    Consider $V$ as a finite dimensional Hilbert space. 
    The \emph{extended tensor algebra} $T((V))$ over $V$ is defined as the space (notice the sum notation which we often apply)
    \begin{equation}
        T((V)) = \left\{ v = (v_0, v_1, \dots) = \sum_{n=0}^\infty v_n \,|\, v_n \in V^{\otimes n} \right\}.
    \end{equation}
    It is equipped with element-wise addition, with element-wise scalar multiplication (in the usual way) and with an inner product: $v\otimes w = (z_0,z_1, \dots)$ such that $V^{\otimes j} \ni z_j = \sum_{k=0}^j v_k \otimes w_{j-k}$ for all $j \geq 0$.
\end{definition}

We are now ready for the definition of a signature of a bounded variation path. 

\begin{definition}[Signature]\label{def: signature}
    Let $I=[s,t]$ be a compact interval and $X: I \to V$ be a continuous path with bounded variation starting at $0$. Consider furthermore a basis $ e_1,\ldots,e_d$ in $V$ and denote by $X^i$ the coordinate of $X$ in this basis.

    Let $\mathbf{i} = (i_1, i_2, \dots, i_n)$ be a multi-index of length $n$ where $i_j \in \{1,\dots, d\}$, for all $j \in \{1,2,\dots, n\}$. 
    Define the coordinate signature of the path $X$ associated to the multi-index $\mathbf{i}$ as iterated integral as follows
    \begin{equation}\label{eq: sig_term}
        S(X)^\mathbf{i} e_{i_1} \otimes \dots \otimes e_{i_n} = \int_{s \leq u_1 \leq \dots \leq u_n \leq t} dX_{u_1}^{i_1} \dots dX_{u_n}^{i_n} e_{i_1} \otimes \dots \otimes e_{i_n} \, .
    \end{equation}
    Then the \emph{signature} $S(X)$ of $X$ is defined as
    \begin{equation}\label{eq: sig_tuple}
        S(X) = (1, S(X)^{(1)}, \dots, S(X)^{(k)}, \dots),
    \end{equation}
    where $S(X)^{(k)} = \sum_{\mathbf{i}=(i_1, \dots, i_k)} S(X)^\mathbf{i} e_{i_1} \otimes \dots \otimes e_{i_n} = \int_{s \leq u_1 \leq \dots \leq u_n \leq t} dX_{u_1} \otimes \dots \otimes dX_{u_k}$ for any $k\geq 1$ has been defined in \eqref{eq: sig_term}.
    In a similar way, we introduce the \emph{truncated signature} of $X$ of degree $N$:
    \begin{equation}
        S_N(X) = (1, S(X)^{(1)}, \dots, S(X)^{(N)}).
    \end{equation}
\end{definition}

\begin{remark}
Signatures can be defined beyond bounded variation paths. For bounded variation its definition is possible by classical Lebesgue-Stiltjes integration theory. For $p$-variation $1 < p <2$, signature can be defined by Young integration, see \cite{Lyons2007}. For path of $p$-variation with $ p > 2 $ one needs -- depending on $p$ -- additional data from the path $X$. This data turn out to be components of signature up to $[p]$. We shall only work with signatures of bounded variation paths here, but we note that we could also define -- in complete analogy to the above and in line of signatures for geometric rough path stemming from a continuous semi-martingale -- signature of a continuous semi-martingale via Stratonovich integration.
\end{remark}

\begin{remark}\label{rem: number coef exact signature}
    For a path in $V=\RR^d$ there are $d^N$ iterated integrals of order $N$. 
    This implies that the size of signatures grows exponentially in degree.
    In particular, the truncated signature of degree $N$ has in total $\sum_{j=0}^N d^j = \frac{d^{N+1}-1}{d-1}$ for $d \geq 1$.
\end{remark}

Analytically, signature of a path is just defined as the collection of all iterated integrals of components of the path with itself. Signature also appears as a dynamical system driven by the path $X$ on the interval $I=[0,t]$ for running $t$, and has therefore several geometric properties. We can also easily interpret components of signature: the first levels are simply the increments of the path, while the second levels are linked to L\'evy areas drawn by paths (see \cite{primerSignatures}).

An important property of signatures is Chen's identity (\cite{learningFromPast2016}):
\begin{theorem}[Chen's identity]
    Let $X$ and $Y$ be paths with bounded variation defined on consecutive time intervals and starting at $0$, e.g.~$[0,t]$ and $[t,T]$, for $0\leq t \leq T$.
    Then it holds that
    \begin{equation}
        S(X \star Y) = S(X) \otimes S(Y),
    \end{equation}
    where the concatenation of the two paths $X$ and $Y$ is defined as
    \begin{equation*}
        (X \star Y)_u = 
            \begin{cases}
                X_u, &\mbox{ for } u \in [0,t],\\
                X_t + Y_u, &\mbox{ for } u \in [t,T].
            \end{cases}
    \end{equation*}
\end{theorem}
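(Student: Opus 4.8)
The plan is to verify the identity degree by degree in the tensor algebra. Since the sum, the scalar multiplication and the product $\otimes$ on $T((V))$ are all defined component-wise, it is enough to show, for every $n\geq 0$, that
\begin{equation*}
    S(X\star Y)^{(n)} = \sum_{k=0}^{n} S(X)^{(k)} \otimes S(Y)^{(n-k)},
\end{equation*}
where $S(X)^{(k)}$ denotes the degree-$k$ term of the signature of $X$ on $[0,t]$, $S(Y)^{(n-k)}$ the degree-$(n-k)$ term of the signature of $Y$ on $[t,T]$, and $S(\cdot)^{(0)}=1$; by the product formula in the definition of the tensor algebra this is exactly the $n$-th component of $S(X)\otimes S(Y)$.

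First I would set $W := X\star Y$ and decompose the order-$n$ simplex on $[0,T]$. Every ordered tuple $0\leq u_1\leq\dots\leq u_n\leq T$ lies, outside the set where $u_j=t$ for some $j$, in exactly one of the sub-simplices $\Delta_k := \{0\leq u_1\leq\dots\leq u_k\leq t\leq u_{k+1}\leq\dots\leq u_n\leq T\}$ for $k=0,\dots,n$, and each $\Delta_k$ is literally the Cartesian product of the order-$k$ simplex on $[0,t]$ with the order-$(n-k)$ simplex on $[t,T]$. Because $X$ and $Y$ are continuous and of bounded variation their Stieltjes measures are non-atomic, so that exceptional set contributes nothing, and the iterated integral defining $S(W)^{(n)}$ splits as $\sum_{k=0}^{n}\int_{\Delta_k}$. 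On $\Delta_k$ the first $k$ differentials are $dX$ (since $W=X$ on $[0,t]$) and the last $n-k$ are $dY$ (since $W_u=X_t+Y_u$ on $[t,T]$, whence $dW_u=dY_u$ there); by Fubini the integral over $\Delta_k$ then factorizes, keeping the tensor order, into $S(X)^{(k)}\otimes S(Y)^{(n-k)}$. Summing over $k$ gives the claimed component-wise identity.

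An equivalent route that avoids discussing the null set is induction on $n$ via the recursion $S(W)^{(n)}_{[0,T]}=\int_0^T S(W)^{(n-1)}_{[0,u]}\otimes dW_u$ together with $\int_0^T=\int_0^t+\int_t^T$: the first piece yields $S(X)^{(n)}$ directly, and the second, after inserting the identity for orders $<n$ on the intervals $[0,u]$ with $u\in[t,T]$ (these paths being concatenations of $X$ with $Y|_{[t,u]}$), reassembles the remaining terms $\sum_{j=0}^{n-1}S(X)^{(j)}\otimes S(Y)^{(n-j)}_{[t,T]}$. Yet another option is to note that $s\mapsto S(W|_{[0,s]})$ solves the linear controlled differential equation $dZ_s=Z_s\otimes dW_s$ with $Z_0=(1,0,0,\dots)$, while $s\mapsto S(X)\otimes S(Y|_{[t,s]})$ solves the same equation on $[t,T]$ with the matching value at $s=t$, so uniqueness forces equality at $s=T$.

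The analytic content throughout is merely interval additivity of (iterated) Stieltjes integrals, which is routine. I expect the only place needing genuine care to be the combinatorial bookkeeping: making sure each differential $dX_{u_j}$ or $dY_{u_j}$ is placed in the correct tensor slot, so that the factorization on $\Delta_k$ produces $S(X)^{(k)}$ and $S(Y)^{(n-k)}$ in that order rather than intermingled, and — in the integral-decomposition version — writing a clean sentence justifying why the overlaps $\{u_j=t\}$ are negligible (or, in the inductive version, phrasing the induction hypothesis so that the lower-order identities can be substituted under the integral sign).
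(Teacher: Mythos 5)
The paper does not prove this theorem at all: Chen's identity is stated as a known result with a citation to the literature (Chevyrev--Kormilitzin and, ultimately, the rough-path texts), so there is no in-paper argument to compare against. Your proposal is the standard proof and it is correct. The simplex decomposition $\{0\leq u_1\leq\dots\leq u_n\leq T\}=\bigcup_{k}\Delta_k$, the observation that $dW_u=dX_u$ on $[0,t]$ and $dW_u=dY_u$ on $[t,T]$ (the additive shift $X_t$ in the concatenation being killed by the differential), the factorization over $\Delta_k$ into $S(X)^{(k)}\otimes S(Y)^{(n-k)}$ with the tensor slots in the right order, and the identification of $\sum_{k=0}^{n}S(X)^{(k)}\otimes S(Y)^{(n-k)}$ with the $n$-th component of the product in $T((V))$ as defined in the paper -- all of this is exactly how the cited sources argue. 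Your handling of the overlap sets $\{u_j=t\}$ via non-atomicity of the Lebesgue--Stieltjes measures of continuous BV paths is the right justification; alternatively, the inductive route you sketch (splitting $\int_0^T=\int_0^t+\int_t^T$ in the recursion for $S(W)^{(n)}$, with the induction hypothesis stated for all endpoints $u\in[t,T]$) avoids the null-set discussion entirely, and the uniqueness argument for the linear controlled ODE $dZ_s=Z_s\otimes dW_s$ is consistent with the paper's own Proposition 2.9, which realizes the (truncated) signature as the solution of exactly that equation. Any one of the three routes, written out, would serve as a complete proof; the only bookkeeping to be careful about is the one you already flag, namely keeping the order of the tensor factors when applying Fubini on $\Delta_k$.
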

\noindent This fundamental relation stemming from the fact that signature is related to a canonical dynamical system in $T((V))$ is the key ingredient to calculate (truncated) signatures with the computer as  it is implemented in the Python package \textsc{iisignature} (see \cite{iisignature}).
In fact, it turns out that for piecewise linear paths, we have
\begin{equation*}
    S(X_1 \star \dots \star X_m) = \bigotimes_{j=1}^m S(X_j) = \bigotimes_{j=1}^{m} \Big ( \sum_{n=0}^\infty \sum_{\mathbf{i}=(i_1,\ldots,i_n)} \frac{1}{n!} X^{i_1}_j \cdots X^{i_n}_j e_{i_1} \otimes \dots e_{i_n} \Big)  \quad ,
\end{equation*}
i.e.~it can be calculated from standard polynomials of components of the linear paths $X_j$ by taking products in $T((V))$.
\bigskip

A priori, it is not clear whether truncated signatures are able to perform well enough for our purposes, as they simply contain only a fraction of the information of the ``entire'' signature.
Nonetheless, as the following theorem shows, this is completely analogous to polynomial expansions (see Chapter 2 of \cite{Lyons2007}).
\begin{theorem}
    Let $X:I=[s,t] \to V$ be a path of bounded variation.
    Then, for each $n \geq 0$, one has
    \begin{equation}
        \left|\int_{s \leq u_1 \leq \dots \leq u_n \leq t} dX_{u_1} \otimes \dots \otimes dX_{u_n}\right| \leq \frac{\|X\|_{TV}^n}{n!}.
    \end{equation}
\end{theorem}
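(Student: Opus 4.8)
The plan is to prove the bound by induction on $n$, exploiting the recursive (dynamical-system) structure of the iterated integral as a function of its upper endpoint. Write $\omega(a,b) := \|X|_{[a,b]}\|_{TV}$ for the total variation of $X$ restricted to a subinterval $[a,b]\subseteq I$; since $X$ is continuous and of bounded variation, $r\mapsto\omega(s,r)$ is a continuous, non-decreasing function with $\omega(s,s)=0$ and $\omega(s,t)=\|X\|_{TV}$. For $r\in I$ set
\begin{equation*}
    F_n(r) := \int_{s \le u_1 \le \dots \le u_n \le r} dX_{u_1}\otimes\dots\otimes dX_{u_n} \in V^{\otimes n},
\end{equation*}
with the convention $F_0\equiv 1\in V^{\otimes 0}=\RR$. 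The claim becomes $\|F_n(r)\|\le \omega(s,r)^n/n!$ for all $r\in I$ and all $n\ge 0$; evaluating at $r=t$ gives the theorem.

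First I would record the elementary identity $F_n(r)=\int_s^r F_{n-1}(u)\otimes dX_u$, which is immediate from Fubini for the iterated Lebesgue--Stieltjes integral over the simplex. Next I would establish the basic estimate for the Stieltjes integral against $X$: for any continuous $V^{\otimes(n-1)}$-valued integrand $f$,
\begin{equation*}
    \Big\| \int_s^r f(u)\otimes dX_u \Big\| \le \int_s^r \|f(u)\|\, d\omega(s,u),
\end{equation*}
which follows by passing to the limit in Riemann--Stieltjes sums, using that the norm on the tensor powers is a cross-norm, i.e.~$\|a\otimes b\|\le \|a\|\,\|b\|$ (for the Hilbert--Schmidt norm induced by the scalar product on $V$ this is in fact an equality), together with $\|X_{u_i}-X_{u_{i-1}}\|\le\omega(u_{i-1},u_i)$ and superadditivity of $\omega$.

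The base case $n=0$ is trivial ($\|F_0\|=1=\omega^0/0!$); for the inductive step, assuming $\|F_{n-1}(u)\|\le\omega(s,u)^{n-1}/(n-1)!$ for all $u$, the two displayed facts give
\begin{equation*}
    \|F_n(r)\| \le \int_s^r \frac{\omega(s,u)^{n-1}}{(n-1)!}\, d\omega(s,u) = \frac{\omega(s,r)^n}{n!},
\end{equation*}
where the last equality is the change-of-variables formula $\int_s^r g(\omega(s,u))\,d\omega(s,u)=\int_0^{\omega(s,r)} g(v)\,dv$ for the continuous increasing function $\omega(s,\cdot)$, applied to $g(v)=v^{n-1}/(n-1)!$. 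An alternative, essentially equivalent route is to reparametrize $X$ by its total variation so that it becomes $1$-Lipschitz in the new time variable on $[0,\|X\|_{TV}]$; the iterated integral is then bounded by the Lebesgue volume of the simplex $\{0\le v_1\le\dots\le v_n\le\|X\|_{TV}\}$, which is exactly $\|X\|_{TV}^n/n!$.

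The only genuinely delicate points are bookkeeping ones: one must check that the tensor norm on $V^{\otimes n}$ is sub-multiplicative under $\otimes$ (true for the Hilbert--Schmidt norm, the natural choice here) so the Riemann-sum estimate goes through, and one must use the continuity of $X$ — hence of $\omega(s,\cdot)$ — to justify the change of variables in the last display (for a merely right-continuous control there would be harmless correction terms). Neither is a real obstacle; the argument is the signature analogue of the Taylor-remainder estimate, and the factorial is produced precisely by integrating $v^{n-1}$ over the simplex.
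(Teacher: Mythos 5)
Your proof is correct: the recursion $F_n(r)=\int_s^r F_{n-1}(u)\otimes dX_u$, the Stieltjes estimate against the control $\omega(s,\cdot)$, and the change of variables producing $\omega(s,r)^n/n!$ are all sound (and the continuity of $X$, which you invoke for the change of variables, is part of the hypotheses via the paper's definition of bounded-variation paths). The paper itself gives no proof, deferring to Chapter~2 of Lyons et al.\ (2007); your argument is the standard one found there, and indeed your ``alternative route'' by reparametrizing $X$ at unit speed and bounding by the volume of the simplex is exactly the proof in that reference, so there is nothing genuinely different to flag.
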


\noindent Thus, high-order signature terms contain less and less information, since we have a factorial decay, and truncated signatures should express similar features as standard signatures, provided the truncation degree is sufficiently high.
\noindent The link between normal signatures and truncated signatures is given by the so-called \emph{canonical projection}:
\begin{definition}[Canonical Projection]
    The canonical projection $\pi_N$ of an element of $T((V))$ on the truncated tensor space $T^N(V)$ is defined by $T((V)) \to T^N(V)$, $(v_0, v_1, \dots, v_N, v_{N+1}, \dots) \mapsto (v_0, v_1, \dots, v_N)$.
\end{definition}

\noindent Exploiting this projection, we see that the truncated signature solves an ordinary differential equation (ODE) in finite dimensions, i.e.~it is a dynamical system, that can be directly derived from signature definition (we refer to Chapter 7 of \cite{friz_victoir_2010} for a proof and more details).

\begin{proposition}\label{prop: truncated_sig_ode}
    Let us fix the interval $I=[s,t]$ and let $X:I \to V$ be a path of bounded variation and $Y:I \to T^N(V)$. Then the solution of the controlled ODE 
    \begin{equation}
        \begin{cases}
            dY_u = \pi_N(Y_u) \otimes dX_u \, , \text{ for } u \in [s,t] \\
            Y_s = (1, 0, 0, \dots)
        \end{cases},
    \end{equation}
    is the signature of $X$ truncated at order $N$. Appropriately interpreted one can choose $N=\infty$ in this assertion, too.
\end{proposition}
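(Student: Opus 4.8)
The plan is to first verify that the full signature of $X$ solves the corresponding (untruncated) equation, then project onto $T^N(V)$, and finally invoke a Cauchy--Lipschitz argument to conclude that the truncated signature is the \emph{unique} solution. For the first step, write $Y_u := S(X)_{[s,u]}$ for the signature of the path restricted to $[s,u]$, so that its level-$k$ component is $S^{(k)}_{s,u} = \int_{s \le u_1 \le \dots \le u_k \le u} dX_{u_1}\otimes\dots\otimes dX_{u_k}$, which is well defined because $X$ has bounded variation. Peeling off the outermost integration variable gives, for every $k\ge 1$, the identity $S^{(k)}_{s,u} = \int_s^u S^{(k-1)}_{s,r}\otimes dX_r$, with $S^{(0)}\equiv 1$. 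Since $dX_r$ lives purely in tensor level one, the level-$j$ component of $Y_r\otimes dX_r$ equals $S^{(j-1)}_{s,r}\otimes dX_r$ for $j\ge 1$ and vanishes for $j=0$; summing over levels yields the integrated form
\begin{equation*}
Y_u = (1,0,0,\dots) + \int_s^u Y_r\otimes dX_r ,
\end{equation*}
i.e.~$dY_u = Y_u\otimes dX_u$ with $Y_s = (1,0,0,\dots)$.

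Next I would apply the canonical projection. Since $\pi_N$ is linear, continuous, commutes with the integral, and is an algebra homomorphism onto $(T^N(V),\otimes)$ --- equivalently, the level-$\le N$ part of $a\otimes dX_r$ depends only on the level-$\le N$ part of $a$ --- applying $\pi_N$ to the identity above gives, with $Z_u := \pi_N(Y_u) = S_N(X)_{[s,u]}$,
\begin{equation*}
Z_u = (1,0,\dots,0) + \int_s^u \pi_N(Z_r)\otimes dX_r ,
\end{equation*}
where the tensor product on the right is understood to land in $T^N(V)$ by truncation. Thus $S_N(X)$ \emph{is} a solution of the stated controlled ODE with the prescribed initial value. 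For uniqueness, observe that the equation is \emph{linear} in the unknown, hence on the finite-dimensional space $T^N(V)$ it is an honest linear controlled ODE; the classical existence and uniqueness theory for ODEs driven by bounded-variation paths (Picard iteration together with a Gronwall estimate in the total-variation scale, cf.~\cite{friz_victoir_2010}) then shows this solution is the only one, so it coincides with $S_N(X)$.

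For the assertion with $N=\infty$, one reads a ``solution in $T((V))$'' as a path all of whose finite-level truncations $\pi_N(\cdot)$ solve the level-$N$ equation; since $T((V))$ is the projective limit of the spaces $T^N(V)$ and each such truncation is already pinned down uniquely by the finite-dimensional result, the full signature is the unique solution and satisfies $dY_u = Y_u\otimes dX_u$ level by level. (Alternatively one can put a suitably weighted norm on $T((V))$ to make it a Banach algebra and run Picard directly, but the projective-limit viewpoint avoids the extra analytic bookkeeping.) I expect the only genuinely delicate point to be the uniqueness step: making the Cauchy--Lipschitz/Gronwall argument rigorous for an ODE controlled by a merely bounded-variation --- rather than $C^1$ --- path, and, in the $N=\infty$ case, choosing the notion of solution so that uniqueness is not lost when passing to the infinite-dimensional tensor algebra. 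Everything else reduces to the definition of the iterated integrals and the elementary structure of the truncated tensor product.
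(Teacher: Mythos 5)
Your argument is correct. Note, however, that the paper does not actually prove this proposition: it simply defers to Chapter 7 of \cite{friz_victoir_2010}, so there is no in-paper proof to compare against, and what you have written is essentially the standard argument contained in that reference. Your three steps are all sound: peeling off the outermost integration variable to get $S^{(k)}_{s,u}=\int_s^u S^{(k-1)}_{s,r}\otimes dX_r$ is legitimate Lebesgue--Stieltjes calculus for a continuous bounded-variation driver; the projection step works because $\pi_N$ is linear, commutes with the integral, and the level-$\le N$ part of $a\otimes dX_r$ only involves levels $\le N-1$ of $a$ (which also resolves the slight abuse of notation in the statement, where $Y$ takes values in $T^N(V)$ yet $\pi_N(Y_u)$ appears); and uniqueness does follow from the linearity of the equation on the finite-dimensional space $T^N(V)$ via Picard iteration or a Gronwall estimate in the total-variation scale --- here it helps to observe that the paper's bounded-variation paths are continuous by definition, so the driving Stieltjes measure has no atoms and the classical Gronwall inequality applies without modification. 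Your projective-limit reading of the $N=\infty$ case is also the natural way to make the paper's ``appropriately interpreted'' precise, and it is consistent with how the untruncated equation $dY_u=Y_u\otimes dX_u$ is understood level by level in the rough-path literature. In short: the proposal fills in, correctly and with the expected technique, a proof the paper chose to outsource.
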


Since $T((V))$ is an associative, non-commutative algebra freely generated by $\operatorname{dim} V$ generators with unit $1$ signatures, as defined in \eqref{eq: sig_tuple}, there are elements in such algebra solving an ODE controlled by the path $X$.

From the point of feature extraction we view signature as a map $S: \operatorname{BV}(V) \to T((V))$.  The analytic properties of signatures allow for the following remarkable statement (which is just the Leibniz rule expressed in a fancy way): any product of two components of a signature is a linear combinations of components of signature: \emph{therefore the set of linear combinations of signature components forms an algebra of functions on $\operatorname{BV}(V)$}. Algebraically speaking $T((V))$ is additionally a Hopf algebra and signature is a group-like element therein.

Additionally, if we shall always include running time as the zeroth component of a path $X$, then the map from $ X $ to its signature is actually injective. 
This can be seen in the following way: knowing the signature of a path means to know in particular
$$
\int_{s \leq u_0 \leq \ldots \leq u_{n-1} \leq v \leq t} dX^0_{u_0} \ldots dX^0_{u_{n-1}} d X^i_{v} = \frac{1}{n!} \int_s^t v^n d X^i_v 
$$
for some fixed interval $ [s,t]$, for $ n \geq 0$, and $ i \in \{1, \ldots, d \}$, whence 
all Fourier coefficients of the path are determined and therefore the path itself (notice again that the path starts at $0$!).

Whence linear combinations of components of signature form a point separating subalgebra of continuous functions on $\operatorname{BV}(V)$ and by the Stone-Weierstrass theorem, for any compact set $K$ of continuous paths of bounded variation, the set of linear functionals on signatures of paths from $K$ is dense in the set of continuous real-valued functions on $K$, denoted by $C(K, \RR)$:

\begin{theorem}
    For any function $f \in C(K, \RR)$, $K\subset \operatorname{BV}(V)$ compact and $\varepsilon > 0$, there exists a $l \in T(V)$, the dual space of $T((V))$ (the free vector space of words in $\{ 0,\ldots,d \}$ or the space of non-commutative polynomials in $e_0,\ldots,e_d$), such that $\sup_{h \in K} \| f(h) - \langle l, S(h)\rangle \|< \varepsilon$.
\end{theorem}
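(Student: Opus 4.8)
The plan is to recognise the family
\[
\mathcal{A} \;=\; \bigl\{\, h \mapsto \langle l, S(h)\rangle \;:\; l \in T(V) \,\bigr\}
\]
as a subalgebra of $C(K,\RR)$ that separates points and contains the constants, and then to quote the Stone--Weierstrass theorem. Two of the three hypotheses are immediate from what has already been collected above. Each map $h \mapsto \langle l, S(h)\rangle$ is continuous on $K$, because $l$ pairs non-trivially with only finitely many tensor levels and on each level the iterated-integral map $h \mapsto S(h)^{(k)}$ is continuous on $\operatorname{BV}(V)$ (it is a finite iterated Lebesgue--Stieltjes integral); hence $\mathcal{A} \subseteq C(K,\RR)$. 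And $\mathcal{A}$ contains the constant functions, since the empty word gives $\langle \emptyset, S(h)\rangle = 1$ for every $h$, the degree-zero component of a signature being always $1$.

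The substantial step is that $\mathcal{A}$ is closed under pointwise products. It is plainly a linear subspace, so it suffices to treat a product of two coordinate functions $\langle u, S(h)\rangle$ and $\langle v, S(h)\rangle$ attached to words $u, v$. This is exactly the shuffle identity already alluded to in the text (``any product of two components of a signature is a linear combination of components of signature''): one has
\[
\langle u, S(h)\rangle \cdot \langle v, S(h)\rangle \;=\; \bigl\langle \operatorname{sh}(u,v),\, S(h)\bigr\rangle ,
\]
where $\operatorname{sh}(u,v) \in T(V)$ is the shuffle product of $u$ and $v$, a finite integer combination of words of length $|u|+|v|$. I would establish this by induction on $|u|+|v|$ using the Leibniz rule (integration by parts) for iterated integrals of a bounded variation path; equivalently, it is the statement that $S(h)$ is group-like in the Hopf algebra $T((V))$, so that a product of matrix coefficients re-expands along the dual of the shuffle coproduct. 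Bilinearity then upgrades this to arbitrary $l, l' \in T(V)$, so $\mathcal{A} \cdot \mathcal{A} \subseteq \mathcal{A}$ and $\mathcal{A}$ is a subalgebra.

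Finally, $\mathcal{A}$ separates points of $K$. For this I would invoke the injectivity argument given just before the statement: once running time is carried along as the zeroth coordinate, the integrals $\tfrac{1}{n!}\int_s^t v^n\, dX^i_v$ for $n \geq 0$ are themselves coordinates of $S(h)$, they pin down all moments and hence all Fourier coefficients of each component $X^i$, so the map $h \mapsto S(h)$ is injective on $\operatorname{BV}(V) \supseteq K$. Thus for $h_1 \neq h_2$ in $K$ there is a word $w$ with $\langle w, S(h_1)\rangle \neq \langle w, S(h_2)\rangle$, and the associated element of $\mathcal{A}$ separates them. With $K$ compact, $\mathcal{A}$ a point-separating subalgebra of $C(K,\RR)$ containing the constants, Stone--Weierstrass gives uniform density of $\mathcal{A}$: for every $f \in C(K,\RR)$ and $\varepsilon > 0$ there is $l \in T(V)$ with $\sup_{h \in K} |f(h) - \langle l, S(h)\rangle| < \varepsilon$. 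The only genuinely non-formal ingredient is the shuffle identity of the middle step; continuity, the constants, and point separation are either bookkeeping or direct appeals to facts already in the excerpt, so the main obstacle is making the Leibniz-rule induction precise --- in particular checking that the shuffle of two finite words stays a finite combination, so that $\operatorname{sh}(u,v)$ lands in the algebraic dual $T(V)$ and not merely in its completion.
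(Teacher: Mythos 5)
Your proposal is correct and follows essentially the same route as the paper: the preceding discussion in the text is exactly the intended proof, namely that the shuffle/Leibniz identity makes linear functionals of the signature a subalgebra of $C(K,\RR)$ containing the constants, time-augmentation makes $S$ injective (via the Fourier-coefficient argument) and hence point-separating, and Stone--Weierstrass then gives uniform density on the compact set $K$. Your additional care about the shuffle of two words being a finite combination (so that $\operatorname{sh}(u,v)$ lies in $T(V)$) is a reasonable refinement of the same argument, not a departure from it.
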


\noindent By using the above theorem signatures are a complete set of feature extractors and are therefore universal approximators (\cite{Kiraly2019}).

However, in any application these infinite dimensional objects have to be approximated by finite dimensional objects. 
Therefore, in practice, we will never deal with the full signatures but use a truncated version of those, which are defined in \emph{truncated signatures} (introduced in Definition \ref{def: signature}). An alternative approach to this standard procedure is presented in the sequel.

\subsection{Randomized Signatures}

Instead of the abstract tensor algebra $T((V))$ we can consider concrete realizations, or -- algebraically speaking -- representations.

Consider $V$ being generated by $e_0,\ldots,e_d$ ($e_0$ will correspond to time), then we can consider maps from $e_i$ to vector fields on some $\mathbb{R}^k$ 
$$
e_i \mapsto \sigma(A_i \,\cdot\, + b_i)
$$
for $ k \times k $ matrices $ A_i$ and vectors $ b_i \in \mathbb{R}^k $ and some real analytic, bounded activation function $\sigma$ which is applied componentwise.

\begin{definition}
Let $X: [0, T] \mapsto \mathbb{R}^{d+1}$ be a continuous path of bounded variation augmented by time in the $0$-th component. Then the solution of
\begin{equation}\label{eq: ode randomized signature}
    \begin{cases}
        d\operatorname{RS}_t = \sum_{i=0}^d \sigma(A_i \, \operatorname{RS}_t + b_i) \,dX^i_t \\
        \operatorname{RS}_0 \in \mathbb{R}^k
    \end{cases},
\end{equation}
is called \emph{randomized signature}.
\end{definition}
If the above representation from the free algebra generated by $ e_0,\ldots, e_d $ to the algebra of differential operators on $\mathbb{R}^k$ is faithful, i.e.~injective, then randomized signature for any initial value should contain precisely the same information as the signature itself:
\begin{theorem}
Let  $\sigma$ be real analytic with infinite radius of convergence and let $A_1,\ldots,A_d$, $b_1,\ldots,b_d$ be independent samples of a probability law absolutely continuous with respect to Lebesgue measure. 
Then linear combinations of randomized signature for all possible initial values $\operatorname{RS}_0 \in \mathbb{R}^k$ are dense in $C(K)$ on compact subsets of bounded variation curves.
\end{theorem}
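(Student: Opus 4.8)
The strategy is to reduce to the already established universality of ordinary signatures by showing that, for almost every draw of the $A_i,b_i$, linear functionals of the randomized signature carry --- up to arbitrarily high order --- the same information as linear functionals of the signature itself. Write $V_i\colon\RR^k\to\RR^k$, $V_i(y)=\sigma(A_iy+b_i)$, both for this vector field and for the associated first-order differential operator $\sum_l V_i(y)_l\,\partial_{y_l}$, and extend $e_i\mapsto V_i$ to an algebra homomorphism $\rho$ from the tensor algebra (non-commutative polynomials in $e_0,\dots,e_d$) into the differential operators on $\RR^k$. Iterating Picard's scheme for \eqref{eq: ode randomized signature}, exactly as for the truncated-signature ODE in Proposition~\ref{prop: truncated_sig_ode}, yields for every $N$ an iterated-integral expansion
\begin{equation*}
  \operatorname{RS}_T \;=\; \sum_{n=0}^{N}\ \sum_{\mathbf{i}=(i_1,\dots,i_n)} f_{\mathbf{i}}(\operatorname{RS}_0)\, S(X)^{\mathbf{i}} \;+\; R_N(X),\qquad f_{\mathbf{i}}:=\rho(e_{i_1}\cdots e_{i_n})\operatorname{id},
\end{equation*}
i.e.\ the coefficients of the iterated integrals $S(X)^{\mathbf i}$ of \eqref{eq: sig_term} are the iterated vector fields applied to the identity map, evaluated at $\operatorname{RS}_0$. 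Since $\sigma$ is real analytic and bounded, the reachable set $\{\operatorname{RS}_t(X):t\in[0,T],\,X\in K\}$ stays in a fixed ball, the $V_i$ are uniformly Lipschitz there, and a Gronwall remainder estimate combined with the factorial decay of iterated integrals from the theorem above gives $\sup_{X\in K}\|R_N(X)\|\to 0$ as $N\to\infty$. Hence a finite linear combination $h(X)=\sum_{j=1}^{m}L_j\bigl(\operatorname{RS}_T(X;\operatorname{RS}_0^{(j)})\bigr)$, with readouts $L_j\in(\RR^k)^{*}$ and initial values $\operatorname{RS}_0^{(j)}\in\RR^k$, has expansion $h(X)=\sum_{\mathbf{i}}\gamma_{\mathbf{i}}\, S(X)^{\mathbf{i}}$ with $\gamma_{\mathbf{i}}=\sum_{j=1}^m L_j\bigl(f_{\mathbf{i}}(\operatorname{RS}_0^{(j)})\bigr)$, up to a remainder uniformly small on $K$ once the top order is large.

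Now fix $f\in C(K)$ and $\varepsilon>0$. By the universality theorem for signatures there are $N$ and $l=\sum_{|\mathbf{i}|\le N}c_{\mathbf{i}}e_{\mathbf{i}}\in T(V)$ with $\sup_{X\in K}|f(X)-\langle l,S(X)\rangle|<\varepsilon/2$. In view of the previous paragraph it suffices to choose $k$ (large enough), the points $\operatorname{RS}_0^{(j)}$ and the readouts $L_j$ so that $\gamma_{\mathbf{i}}=c_{\mathbf{i}}$ for every $|\mathbf{i}|\le N$; the leftover higher-order part of $h$ is then $<\varepsilon/2$ provided $N$ was taken large enough, and $h$ approximates $f$ to within $\varepsilon$ on $K$. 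For a fixed choice of evaluation points the map $(L_1,\dots,L_m)\mapsto(\gamma_{\mathbf{i}})_{|\mathbf{i}|\le N}$ is linear, and a short duality argument shows that, for a suitable $m$ and a generic choice of the $\operatorname{RS}_0^{(j)}$, it is onto $\RR^{\#\{|\mathbf{i}|\le N\}}$ \emph{precisely when} the $\RR^k$-valued functions $\{f_{\mathbf{i}}:|\mathbf{i}|\le N\}$ are linearly independent --- equivalently, when $\rho(l)\operatorname{id}\not\equiv 0$ for every nonzero $l\in T(V)$ of degree $\le N$.

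This last point --- generic faithfulness of the representation $\rho$ on the identity map --- is the heart of the matter and is where the hypotheses on $\sigma$ and on the sampling law enter. Each $f_{\mathbf{i}}$ depends real-analytically on the parameter vector $(A_0,\dots,A_d,b_0,\dots,b_d)$, so the set of parameters for which $\{f_{\mathbf{i}}:|\mathbf{i}|\le N\}$ becomes linearly dependent is the common zero locus of the maximal minors of the (finite) matrix formed by their Taylor coefficients at $0$, hence a real-analytic subvariety of the parameter space; it is a \emph{proper} subvariety --- and therefore Lebesgue-null --- as soon as we exhibit one parameter value at which the family is linearly independent. For $k$ large enough such a configuration exists: choosing the $b_i$ so that the scalars $\sigma(b_i),\sigma'(b_i),\sigma''(b_i),\dots$ are in general position (possible because $\sigma$, being bounded and entire, is non-polynomial) and the $A_i$ generic, a direct inspection of the jets $f_{(i)}(0),\,f_{(i,j)}(0),\dots$ shows the iterated vector fields to be independent. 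Since the law of $(A_i,b_i)$ is absolutely continuous with respect to Lebesgue measure, almost surely the sampled parameters avoid this null set; for such parameters the matching system is solvable, giving the desired $h$ and hence density of the linear span of randomized signatures (over all initial values $\operatorname{RS}_0\in\RR^k$) in $C(K)$.

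\emph{Main obstacle.} The only genuinely non-routine step is the one in the third paragraph: exhibiting a single admissible configuration $(A_i^{*},b_i^{*})$ for which the iterated vector fields $V_{i_1}\cdots V_{i_n}\operatorname{id}$ are linearly independent, so that the analytic-null-set argument applies. This ``generic faithfulness'' is precisely why randomized signatures inherit the universality of signatures, and it is where the absolute continuity of the sampling law and the non-polynomial, real-analytic nature of $\sigma$ are used; everything else (the convergence of the expansion and the uniform control of the remainder as $N$, $k$, $m$ grow) is routine given the factorial decay of iterated integrals already recorded above.
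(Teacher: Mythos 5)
Your proposal follows essentially the same route as the paper's proof: a convergent iterated-integral (stochastic Taylor) expansion of $\operatorname{RS}_T$ whose coefficients are the iterated vector fields $\sigma(A_i\cdot+b_i)$ applied to the identity and evaluated at $\operatorname{RS}_0$, combined with generic faithfulness of the representation $e_i\mapsto\sigma(A_i\,\cdot+b_i)$ obtained from a real-analyticity/Lebesgue-null-set argument in the parameters $(A_i,b_i)$, and finally the Stone--Weierstrass universality of ordinary signatures. The paper compresses the two analytic ingredients into citations (\cite{mityagin2015zero} for the zero-set argument, \cite{Bau:12} for convergence of the expansion) and, exactly like you, does not explicitly exhibit a configuration where the iterated vector fields are linearly independent --- the step you correctly single out as the real content --- deferring instead to the similar arguments in \cite{cuclartei:20}.
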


\begin{proof}

By construction the vector fields $\sigma(A_i\,\cdot+b_i)$ cannot satisfy any non-trivial relation with probability one, since otherwise we would have constructed a non-zero real analytic function on some $ \mathbb{R}^{(d+1) \times k^2 + (d+1) \times k}$, the product space of $d$ matrices and $d$ vectors with product measure equivalent to Lebesgue measure, whose zero set has positive Lebesgue measure. This is impossible by \cite{mityagin2015zero}. Therefore the representation of $T((V))$ into real analytic vector fields on $\mathbb{R}^k$ is faithful. Compare here very similar arguments in \cite{cuclartei:20} in a slightly different but comparable context of generalized H\"ormander or Chow theorems.

By boundedness of $ \sigma$ randomized signatures are well defined and by the fact that $\sigma$ is real analytic with infinite radius of convergence we obtain by \cite{Bau:12} that the stochastic Taylor expansion converges and represents $\operatorname{RS}_t$ for any $t > 0$. 
Whence signature components can be expressed by randomized signature components for different initial values and we are done.

\end{proof}

\begin{remark}
In practise we choose $k$ moderately high, but we usually do only calculate $\operatorname{RS}$ for one initial value. One could improve in view of the above proof the approximation properties by calculating $\operatorname{RS}$ for different initial values.
\end{remark}

\begin{remark}
Notice also that we could take any free set of $d+1$ vector fields on any geometric structure and construct randomized signatures thereon. Notice also that different initial values of $ \operatorname{RS} $ just corresponds to different shifts $ b_i $. Stacking together these equations leads to one initial value for a moderately high dimensional system of type \eqref{eq: ode randomized signature}.
\end{remark}

The use of exact, truncated or randomized signatures as a feature set is very closely related to the concept of reservoir computing which models input - output systems as

\begin{equation}\label{eq: reservoir computing}
    \begin{cases}
        x_t = F(x_{t-1}, z_t) \\
        y_t = h(x_t)
    \end{cases},
\end{equation}
where $F$ is some mapping of inputs $z_t$ and previous features $x_{t-1}$ to the next feature $x_t$. $h$, usually called \emph{readout} or \emph{observation} map in the context of reservoir computing, fulfills the role of mapping the features, in our case either truncated exact signatures or randomized signatures, to the output.
The activation function $\sigma$ in  \eqref{eq: ode randomized signature} plays the role of a \emph{squashing function} which ensures that the results do not blow up.
A detailed discussion on reservoir computing and state-affine systems together with some important theoretical results can be found in \cite{echostate} and \cite{sigSAS}. 
In line with this interpretation, we will call $k$ the reservoir dimension.

\bigskip

Next, we will present, how randomized and exact (truncated) signatures can be used in practice for classification tasks. 
For this, we will first present an example based on simulated data and afterwards a case study based on real data, in which we aim to identify market frauds using signatures. 
In the remainder of this article, we will write signature for either exact or randomized signatures and we will specify it explicitly when necessary. 

\section{Numerical Applications}
\label{sec:example on synthetic data}
In this section, we show the applications of our methodology first by using the simulated price paths from geometric Brownian motion and later with some real data obtained for cryptocurrency pump and dumps cases.

\subsection{Simulated Data}
\label{sec: synthetic data}
In this section we illustrate how our machine learning algorithm with signatures can be used to differentiate the price paths that follow a geometric Brownian motion from the price paths coming from a fake Brownian motions which contain artificial forgeries.

Our synthetic data set consists of sample paths from a geometric Brownian motion  (GBM in the following) of the form 
\begin{equation}\label{eq: gbm}
    dX_t =\mu X_t dt + \sigma X_t dW_t,
\end{equation}
and artificially manipulated geometric Brownian motions. 
In general, manipulated Brownian motions do not exhibit all typical behaviours from real Brownian motions. 
Particularly, in successful market manipulations, some market participants have knowledge about the price paths, which is not publicly available. 
This manifests itself, for instance, in a lack of randomness for the periods of market manipulation.

\begin{figure}
    \centering
    \includegraphics[width=\textwidth,height=\textheight,keepaspectratio]{"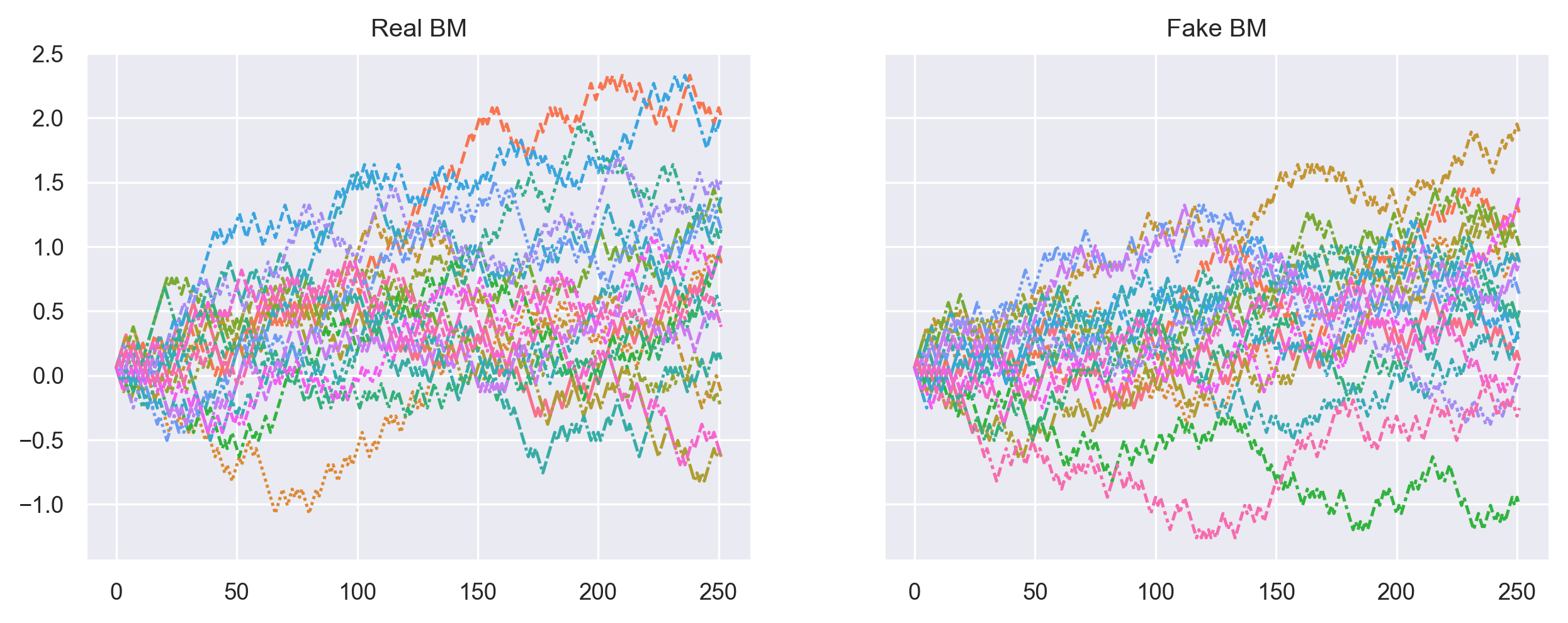"}
    \caption{Sample paths of Geometric Brownian Motion (GBM) used as synthetic data. Parameters are
    $\mu=0.25, \sigma=0.2$ and each path, representing 1 year, is made by 252 steps. All trajectories are shifted subtracting $S_0=100$.}
    \label{fig: sample paths generated}
\end{figure}

Using this fact, we choose to generate artificially manipulated Brownian motions in the following way:
Given price differences $R_i = X_i - X_{i-1}$, a pattern $P$ of the form $P_n = (p_1, \ldots, p_n), \quad p_i \in \{-1, 1\}$ and a partition $\mathcal{P}$ with subintervals of length $n$, we prevent the occurrence of $P$ in the return path. 
In order to ensure that the expected mean value of the paths do not change, we always also prevent the occurrence of $-P_n = (-p_1, \ldots, -p_n)$ in the return path. 
More concretely, our synthetic data consists of 20'000 paths, equally divided in real and manipulated, which are then split into train and test sets. 
Longer patterns $P$ will make the problem more difficult, as there will be less differences between paths to distinguish them. 
In our experiments, we choose to remove patterns of length 6 or higher.

\begin{figure}[H]
    \centering
    \includegraphics[trim={0 0 0 0},width=0.9\textwidth]{"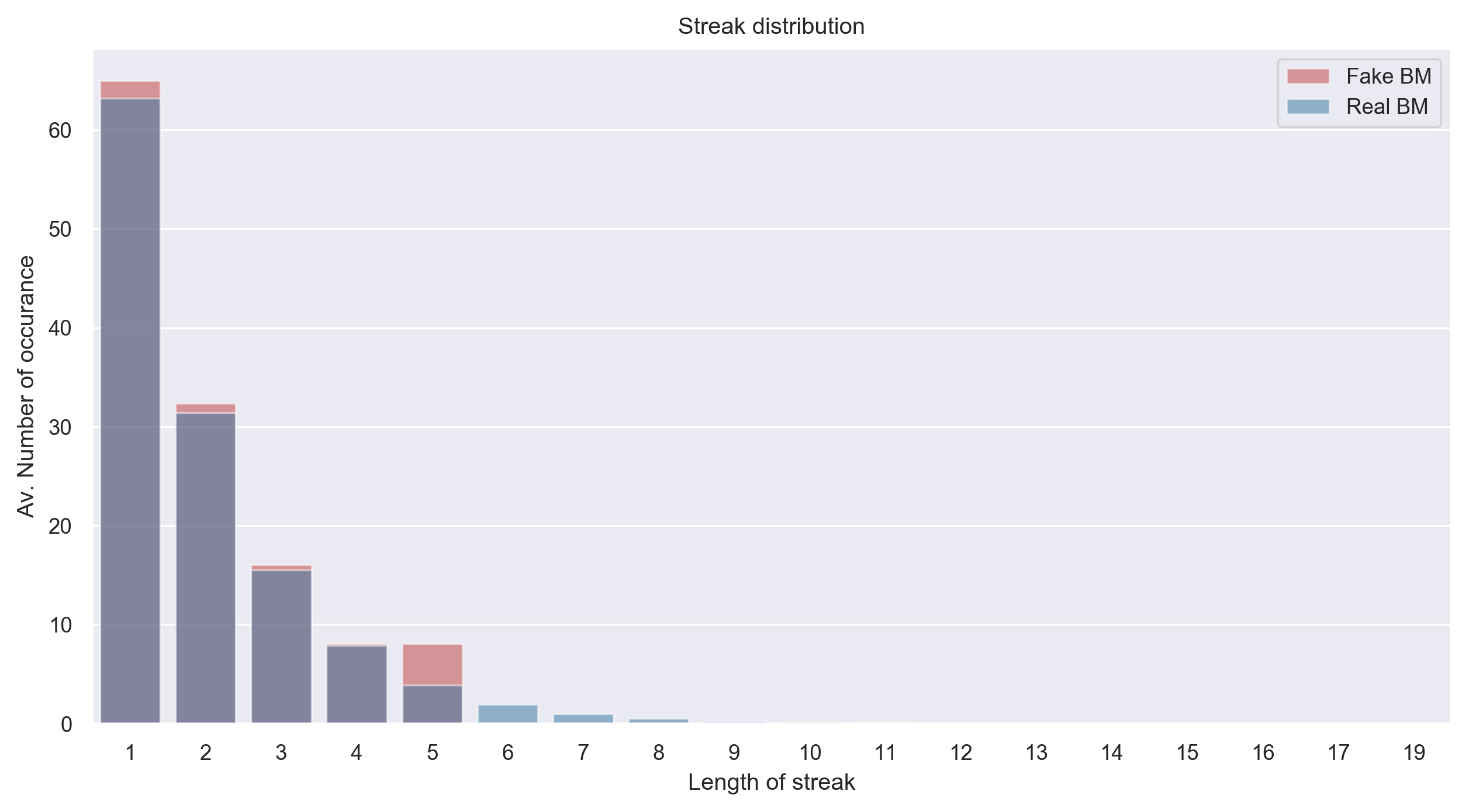"}
    \caption{Histogram of the number of consecutive increases/decreases in the paths. Fake GBM do not have any combination of more than 5 consecutive increases/decreases along their paths.}
    \label{fig: streaks illustration}
\end{figure}

Samples paths of the fake geometric Brownian motions (GBM) and real Brownian motions can be found in Figure \ref{fig: sample paths generated}. From this figure it can be seen that it is not easy to understand the difference between real and fake Brownian motions by visual inspection. Also obvious characteristics of the paths like mean and variance will be preserved and therefore they cannot be easily used to differentiate the paths.

The manipulated Brownian motions can of course be differentiated by counting the number of long streaks as illustrated in Figure \ref{fig: streaks illustration}. As the number of occurrences of streaks does not differ substantially between  real  and  fake  Brownian  motions,  it  is  not  a trivial  task to  differentiate  them.

\subsection{Feature set generation}
We follow the methodology developed in Section \ref{sec:methods} to transform the simulated data described into features which are then used by a logistic regression. We add the time dimension and the first differences to the path yielding $(t, X_t, \Delta X_t)$ as the 3-dimensional time series which we use for the classification task. 
The first differences are added as our experiments have shown that they improved robustness and performance. 
Note that the information content does not change with the addition of the first differences.
Table \ref{tab: hyper paramaters simulated} gives a full overview for all parameters used to generate the results in Section \ref{sec:results simulated}. 
The features are finally generated by solving \eqref{eq: ode randomized signature} with the hyper-parameters described in Table \ref{tab: hyper paramaters simulated} using a forward Euler scheme. 
In this example, we compute the random signatures for each path and use a logistic regression on the training set. 
The weights are then used to evaluate the performance of the models. 
For each path, model outputs will be in [0,1] which indicates the probabilities. 
The model labels each path as either a real geometric Brownian motion or a fake geometric Brownian motion by rounding the model outputs.

\vfill

\begin{table}[ht]
\caption{Hyper parameters for random signatures}
\centering
\begin{tabular}{llll}
\toprule
Symbol&Description & Simulated data & Cryptocurrency data \\
\midrule
$\sigma$& Activation function in \eqref{eq: ode randomized signature}&  $\tanh$ &   $\tanh$ \\
$\mu_A$ & Mean of $A$ in \eqref{eq: ode randomized signature}            &  0.15 &  0.05\\
$\sigma^2_A$ & Variance of $A$ in \eqref{eq: ode randomized signature} &    0.6   & 0.1 \\
$\mu_b$  & Mean of $b$ in \eqref{eq: ode randomized signature}          &   0 &  0 \\
$\sigma^2_b$ & Variance of $b$ in \eqref{eq: ode randomized signature}&       1 &    1  \\
$w$ &Window size as described in \ref{sec: feature set generation crypto}    &      - &       100  \\
$o$ &Offset length as described in \ref{sec: feature set generation crypto}  &       - &       5  \\
$k$ & Reservoir dimension& 200 &  50       \\
\bottomrule
\end{tabular}
  \label{tab: hyper paramaters simulated}
  \newline

       {\raggedright Table \ref{tab: hyper paramaters simulated} shows the results all hyper parameters used for both the simulated data and the cryptocurrency data for the generation of random signatures.\par}
\end{table}

\vfill

\subsubsection{Results}
\label{sec:results simulated}
\begin{figure}[H]
    \centering
      \caption{ROC curve (left) and PR curve (right) for the classification task of differentiating paths of the form \ref{fig: sample paths generated}. Both figures show very similar results for the exact truncated signatures and the random signatures.}
\includegraphics[width=\textwidth,keepaspectratio]{"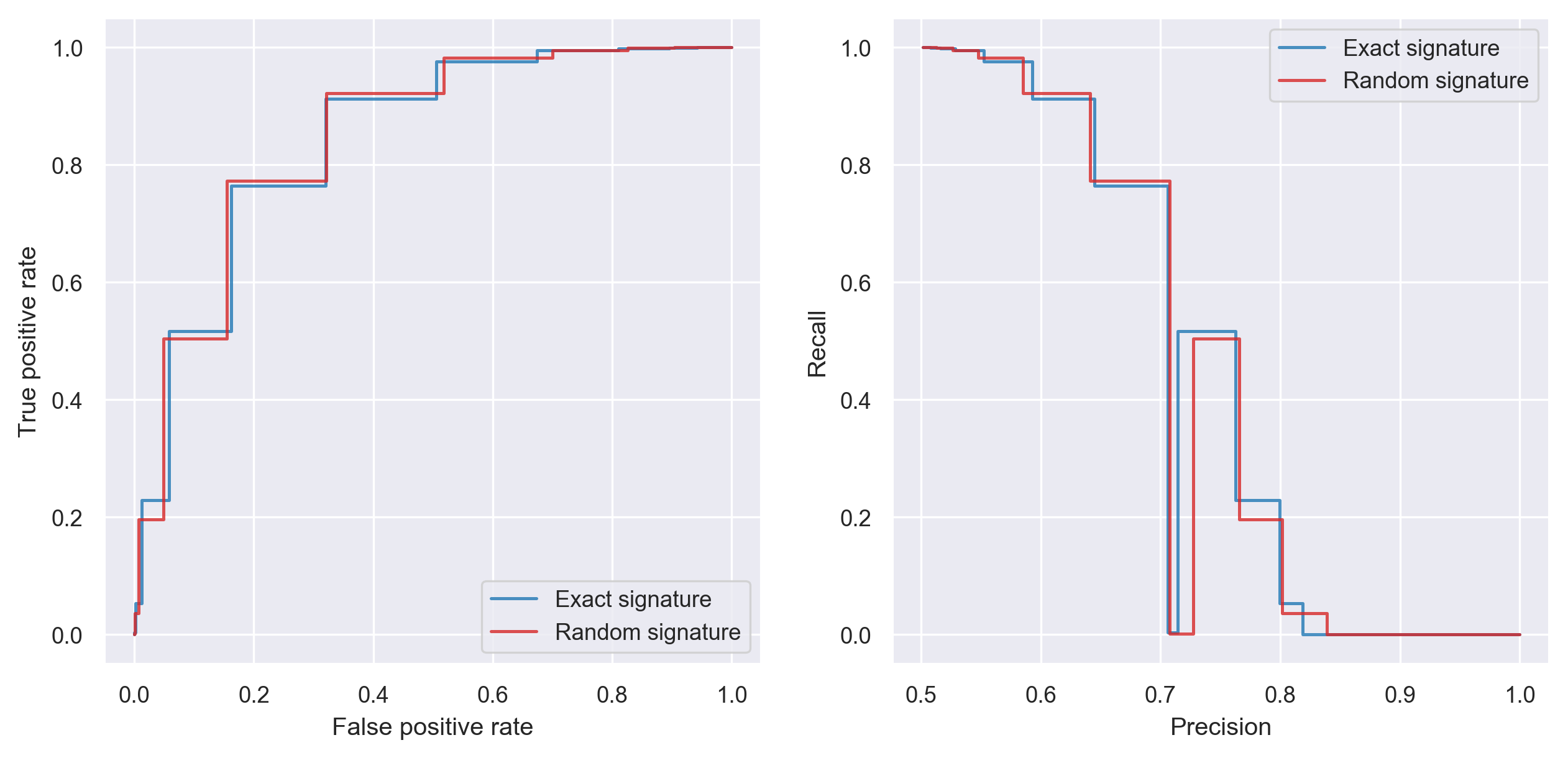"}
    \label{fig: balanced roc pr plot}
\end{figure}

We evaluate the performance of our methods by comparing the total accuracy. 
For that, it is useful to introduce the following definitions that are common in classification (and pattern recognition):
\begin{definition}[Precision]
    Precision is the number of true positives (i.e.~the number of items correctly labelled as belonging to the positive class) divided by the total number of elements labelled as belonging to the positive class (i.e.~the sum of true positives and false positives, which are items incorrectly labelled as belonging to the class).
    Briefly, $p = \nicefrac{tp}{(tp+fp)}$.
\end{definition}
\begin{definition}[Recall]
    Recall in this context is defined as the number of true positives divided by the total number of elements that actually belong to the positive class (i.e. the sum of true positives and false negatives, which are items which were not labelled as belonging to the positive class but should have been).
    Briefly, $r = \nicefrac{tp}{(tp+fn)}$.
\end{definition}
\begin{definition}[Accuracy]
    Accuracy is the proportion of correct predictions (both true positives and true negatives) among the total number of cases examined.
    Briefly, $a = \nicefrac{(tp+tn)}{(tp+tn+fp+fn)}$.
\end{definition}
In our experiments, model performances obtained on the train set were similar to the test set, which indicates good generalisation properties. 
Additionally, we also compared the accuracy scores for different regression outputs. This can be seen in Figure \ref{fig: window plot}. 
This figure confirms that regression outputs closer to 0 and 1 yield higher accuracy scores. 
Notably the results for the top quantiles are better than for the bottom quantiles. 
This can be explained by the fact that in our particular way of modifying Brownian motions, the existence of long streaks will be a very strong indicator that the path is a real Brownian Motion whereas the absence of long streaks does not always imply that the path has been manipulated as strongly. 
Therefore, our methods show better results in correctly identifying real Brownian motions than in detecting manipulated Brownian motions.

Next, we compare the model performance quantitatively. The results are summarised in Table \ref{tab: comparison simulated data}. 
Comparing the methods, we observe a similar outcome for randomized signatures and exact truncated signatures. 
\begin{definition}[ROC curve]
    ROC (Receiver Operating Characteristics) graphs are two-dimensional graphs in which \emph{tp}-rate (defined as the recall) is plotted on the y axis and \emph{fp}-rate (defined as $\nicefrac{fp}{(fp+tn)}$) is plotted on the x axis. An ROC graph depicts relative tradeoffs between benefits (true positives) and costs (false positives).
\end{definition}
\begin{definition}[PR curve]
    PR (Precision Recall) graphs are two-dimensional graphs in which \emph{recall} is plotted on the y axis and \emph{precision} is plotted on the x axis. The focus of the PR curve on the minority class makes it an effective diagnostic for imbalanced\footnote{As it will be the case for cryptocurrency data in Section \ref{sec: crypto}} binary classification models.
\end{definition}

\noindent Also the ROC curve and the PR curve both show very similar results for randomized signatures compared to exact signatures as seen in Figure \ref{fig: balanced roc pr plot}. 
This indicates the validity of our approach as the results suggest that information content in the exact truncated signatures can be retained by using randomized signatures. The results shown serve to demonstrate, that both random signatures and exact signatures with even a simple readout function can be successfully used on this sample problem.

\begin{figure}[H]
    \centering
    \includegraphics[width=\textwidth,keepaspectratio]{"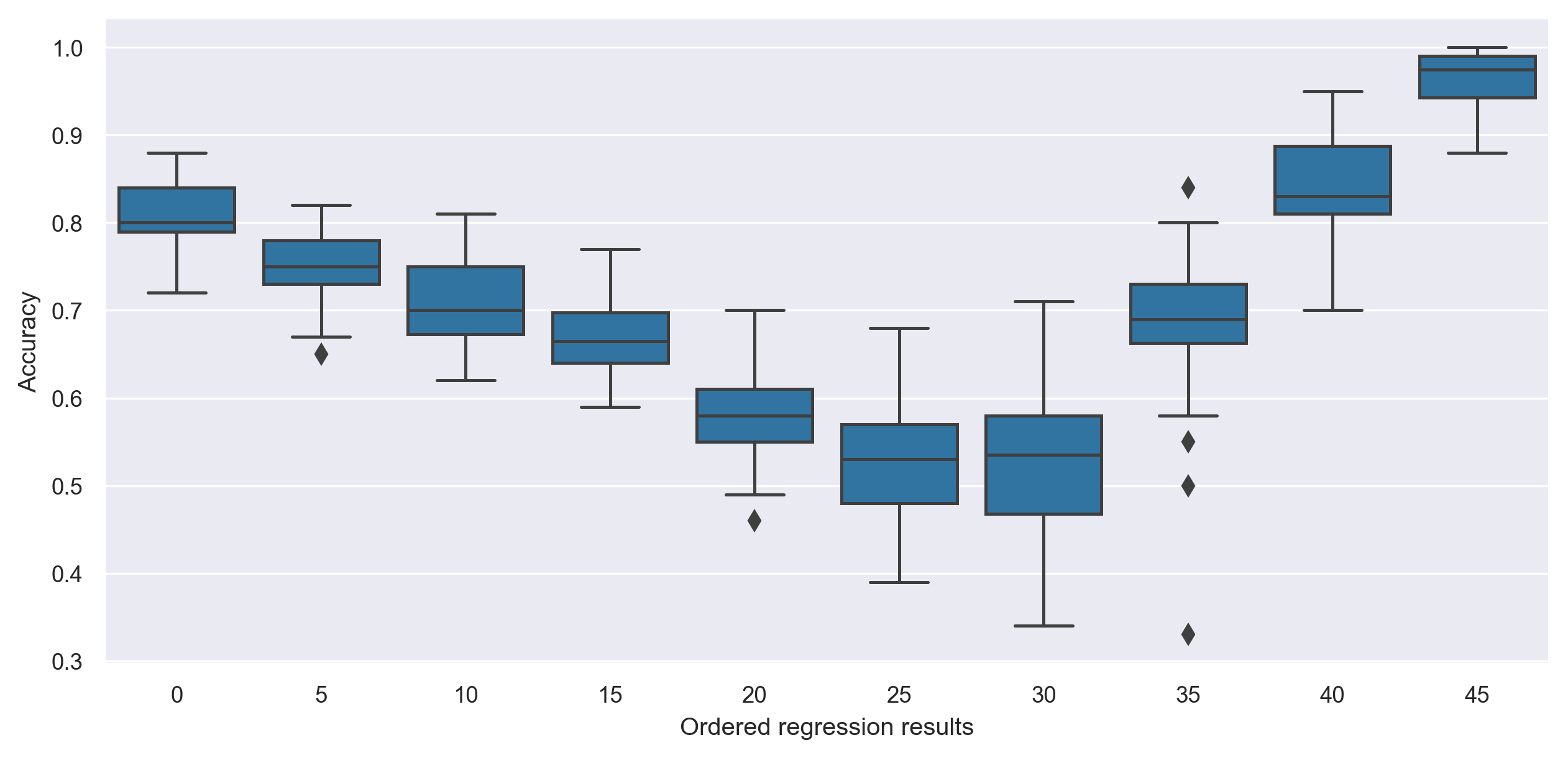"}
    \caption{The x-axis shows ordered model outputs for 10 separate quantiles. The y-axis shows the accuracy of given set of model outputs. Here one can see that accuracy is higher for model outputs, which are farther from the cutoff.}
    \label{fig: window plot}
\end{figure}

\begin{table}[ht]
\caption{Real signature vs random signature}
\centering
\begin{tabular}{lll}
\toprule
{} & Exact signature & Randomized signature \\
\midrule
Mean accuracy            &        71.81\% &         72.4\% \\
Mean accuracy bottom 10\% &       83.56 \% &       84.95 \% \\
Mean accuracy top 10\%    &       95.11 \% &       95.74 \% \\
\bottomrule
\end{tabular}
  \label{tab: comparison simulated data}
  \newline

       {\raggedright Table \ref{tab:  comparison simulated data} shows the results for the randomized and the truncated exact signatures for the sample problem. Here top/bottom $10\%$ displays the accuracies when only taking the top/bottom $10\%$ of the regression results respectively.\par}
\end{table}

\subsection{Cryptocurrency pump and dumps}\label{sec: crypto}
Due to the unregulated nature of cryptocurrency brokers, cryptocurrencies have been the target of all sorts of market manipulation attempts. 
In this case study, we will focus on pump and dump schemes.
Typically, pump and dump schemes in the cryptocurrency space are organised via private chat groups in channels using different softwares, such as Telegram or Discord, where any interested person can join the channels and receive the date and currency pair of the planned pump and dump schemes. 
The exact currency pair is given only seconds before the start of the pump, typically via OCR (optical character reading-resistant) resistant images similar to captchas.
There are even aggregate web pages, which collect information of pump and dump channels and display them on their web page, see for example Figure \ref{fig: pumpolymp} taken from \href{https://pumpolymp.com/}{pump olymp}. 
We would like to emphasize that we do not recommend using such websites to participate in pump and dump schemes. As can be seen in the following process description, there is great uncertainty with this strategy, with participants having the least amount of information. 
This makes following such strategies extremely risky besides the moral considerations.
\begin{figure}[h!]
  \caption{Example of a web page collecting pump and dump channels}
  \label{fig: pumpolymp}
  \includegraphics[width=\textwidth]{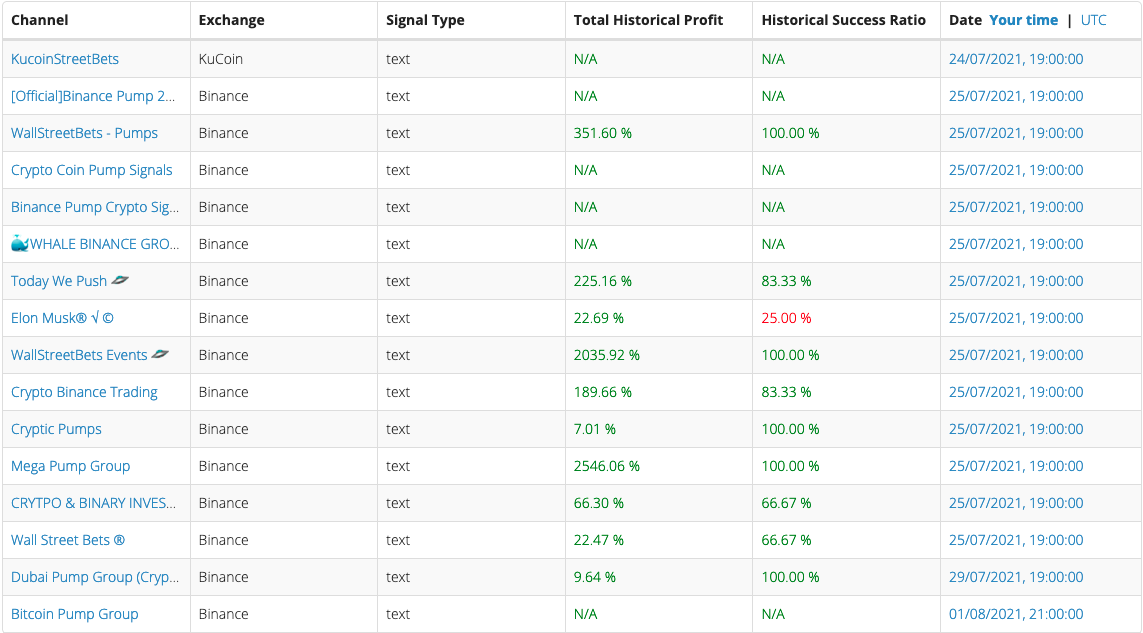}
\end{figure}

In the following we will describe the main ingredients and process of a typical cryptocurrency pump and dump scheme. A more detailed description of the phenomenon can be found in \cite{Xu2019} and \cite{LaMorgia2021}.
\subsection{Pump and Dump ingredients}

\begin{itemize}
    \item Organisers: Most often, pump and dump organisers are admins of the corresponding
    Telegram / Discord channels and use various means to advertise upcoming pumps. 
    They are the ones who also determine the exchange and the currency pair which will be the target of the pump and dump scheme.
    \item Participants: These are the traders who participate in pump and dump schemes in order to generate abnormal returns by using non-public information.
    They are subscribed to the channels and receive the information about the exchange and the currency pair targeted. 
    It is worth noting that participants are usually categorised into different tiers depending on their financial contribution to the organisers or their help in spreading the channel; in other words, there is an internal hierarchy.
    Participants in different tiers receive the information about the exchange and currency pair at different time points, where participants in the lowest tier receive the information last. 
    This obviously creates an extremely unbalanced situation where participants in higher tiers have a way higher chance to generate profits.
    \item Exchange: Cryptocurrencies are typically traded in crypto exchanges, which function as brokers. 
    Due to the currently unregulated nature of these exchanges, they do not have too much incentive to stop pump and dump schemes as they generate profits for them due to transaction costs. 
    As a matter of fact, some exchanges such as Yobit has have openly organised pump and dump multiple times \cite{Xu2019}.
    \item Currency pair: The currency pair is a price quote of the exchange rate for two different cryptocurrencies (e.g. BCT/ETH is the price of Bitcratic denominated in Ethereum). 
    Pump and dump schemes usually target coins with a low market capitalisation so that the price can be influenced significantly with relatively small trades.
\end{itemize}

\subsection{Pump and Dump process}\label{sec: pd process}

Typically a pump and dump scheme has three phases, which will be described in the following.
\begin{itemize}
    \item Pre-pump activity: Organisers are advertising the upcoming pump via their respective channels. 
    On predetermined times, see Figure \ref{fig: pumpolymp}, a countdown is started in order to reveal the currency pair being targeted. 
    In order to mask automatic detection by bots, the information is typically revealed as manually modified pictures, which are difficult to read for machines.
    \item Pump: The organisers urge their members to buy the coin and hold it in order to inflate the price on the short term. 
    If the pump is successful, this leads to an immediate price spike, which can often also be recognised by human eye.
    \item Dump: After the coin has drastically increased in price, invariably some participants will start consolidating their gains by selling the coin. 
    This leads to the price decreasing which then leads to a cascading effect of more and more participants selling the coin.
    This often results in the price returning to the level close to the pre-pump level.
\end{itemize}

This whole process takes place in a matter of minutes if not seconds. 
In fact, among the pump and dump schemes we consider and which we could classify based on graphical comparison, the longest one lasts  8 minutes and the average duration is about 2.3 minutes. 
In this case study, we follow the methodology introduced by  \cite{LaMorgia2021} and \cite{Kamps2018}. \cite{LaMorgia2021}\footnote{We would like to personally thank the authors for the long and tedious work done and for the choice of making these data freely available at \href{https://github.com/SystemsLab-Sapienza/pump-and-dump-dataset}{https://github.com/SystemsLab-Sapienza/pump-and-dump-dataset}.} manually collect pump and dump cases by joining the various Telegram / Discord channels. We are also using these cases in order to verify the results from our algorithm.

\subsection{Data analysis}\label{sec: data analysis}

As mentioned above, we take our data labels from the database created by \cite{LaMorgia2021}, which they obtain by collecting information joining more than 100 groups from July 2017 to January 2019. 
The database consists of a list of what we call as Pump-and-Dump attempts, which we will abbreviate as PD attempts for the rest of this article. Every entry of the list is made by
\begin{enumerate}
    \item the acronym for the name of the coin, e.g.~BTC stands for Bitcoin and ETH for Ethereum, usually called \textbf{symbol} of the coin;
    \item the name of the Telegram / Discord \textbf{group} where this information was retrieved;
    \item the \textbf{date} (day) of the PD attempt;
    \item the \textbf{time} (hour and minute) of the PD attempt;
    \item the \textbf{exchange} on which the PD attempt took place.
\end{enumerate}
Using this information and the Python-library \href{https://ccxt.readthedocs.io/en/latest/index.html}{\textsc{ccxt}}, it is possible to fetch data from exchanges in a neighbourhood of the PD attempt. 
In this way, we avoid downloading an overwhelming quantity of data which would be even difficult to store. 
For every entry of the aforementioned database, we decided to download and work with trades, that is all orders that were placed and also realized by selling or buying cryptocurrencies.
Every trade is characterized by the following information:
\begin{enumerate}
    \item \textbf{Symbol} which denotes the traded coin. 
    In practice, this is not just one acronym, but it refers to the traiding pair, e.g.~BTC/USD, where Bitcoins (BTC) are traded using USD (US Dollars);
    \item \textbf{Timestamp}, an integer, denoting the UNIX time in milliseconds;
    \item \textbf{Datetime}, i.e.~the ISO8601 datetime with milliseconds;
    \item \textbf{Side}, string, either ``buy'' or ``sell'' to distinguish the operation kind: for the symbol BTC/USD ``buy'' means buying BTC using USD, while ``sell'' means receiving USD for your BTC;
    \item \textbf{Price}, float, indicating the price at which the pair was traded;
    \item \textbf{Amount}, float, denoting the quantity (in base currency, i.e.~USD in the example of BTC/USD pair) which was traded.
\end{enumerate}
For the moment, all analyses we conducted are made using pairs of the type $\cdot\,$/BTC, where BTC is seen as the base currency.
For this reason, why we will not explicitly write the entire pair in the following. 
We decided to download all trades in different time intervals: by considering the moment of the PD attempt as the center of a time window, we retrieved data for intervals $\pm$ 1.5, $\pm$ 3 days and $\pm$ 7 days, obtaining for every PD attempt intervals of 3, 6 and 14 days respectively.
Increasing here the analysis for $\pm$ 7 days has to be treated carefully, as there might be manipulations which took place without being recorded in the database\footnote{In particular, as the authors of \cite{LaMorgia2021} write, they were not able to retrieve information from groups in Russian or Chinese.}. 
We also observed that this is indeed the case for some coins (see, for example, Figure \ref{fig: sample2 pd detection}).

The numerical data we used for the analyses required some preprocessing on the raw data. 
First of all, we aggregated the data in the same way as in \cite{LaMorgia2021}, in other words whenever trades have the same timestamp, side and price, then the trade volumes are summed accumulated. In that case, one can interpret the trades as one larger trade without losing any information.

Furthermore, the variable side was translated into a numerical format by encoding all ``buy'' signals as 0.5 and all ``sell'' as -0.5.
In a similar way, we calculate \textbf{volume} as the product of price and amount.
Moreover, prices are used to compute the simple returns.
Hence, we use the time series of the trade time, returns, volume and trade side information  to compute the signatures.
Finally, our experiments showed that the normalization of all of these variables to the [0,1] interval yield better results.

A visual example of the data that we use is given in Figure \ref{fig:GRScoin}. We plot the price (top panel) and volume (bottom panel) against prescribed time interval for the Pump \& Dump attempt, which was in this case successful.
In particular, we show the price and volume charts of GRS (Groestlcoin) against the base currency (BTC) and we plot all the trades which occur in the time interval 31 minutes before and 71 minutes after the prescribed PD time. On the figure, ``buy'' (``sell'') trades are shown in red (green).

\begin{figure}
    \caption{GRS price and volumes around Pump \& Dump (successful) attempt which was recorded on 5\ts{th} March 2019 at 16:30 (UNIX time) - a visual example. In the top plot the lines follow the price evolution through the different trades, while in the bottom plot we can see the volume in terms of the basis currency (BTC). In both cases, red colour is used for `buy' trades and the green for `sell' trades.}
  \label{fig:GRScoin}
  \centerline{\includegraphics[trim={0 0 0 0}, width=1.15\textwidth]{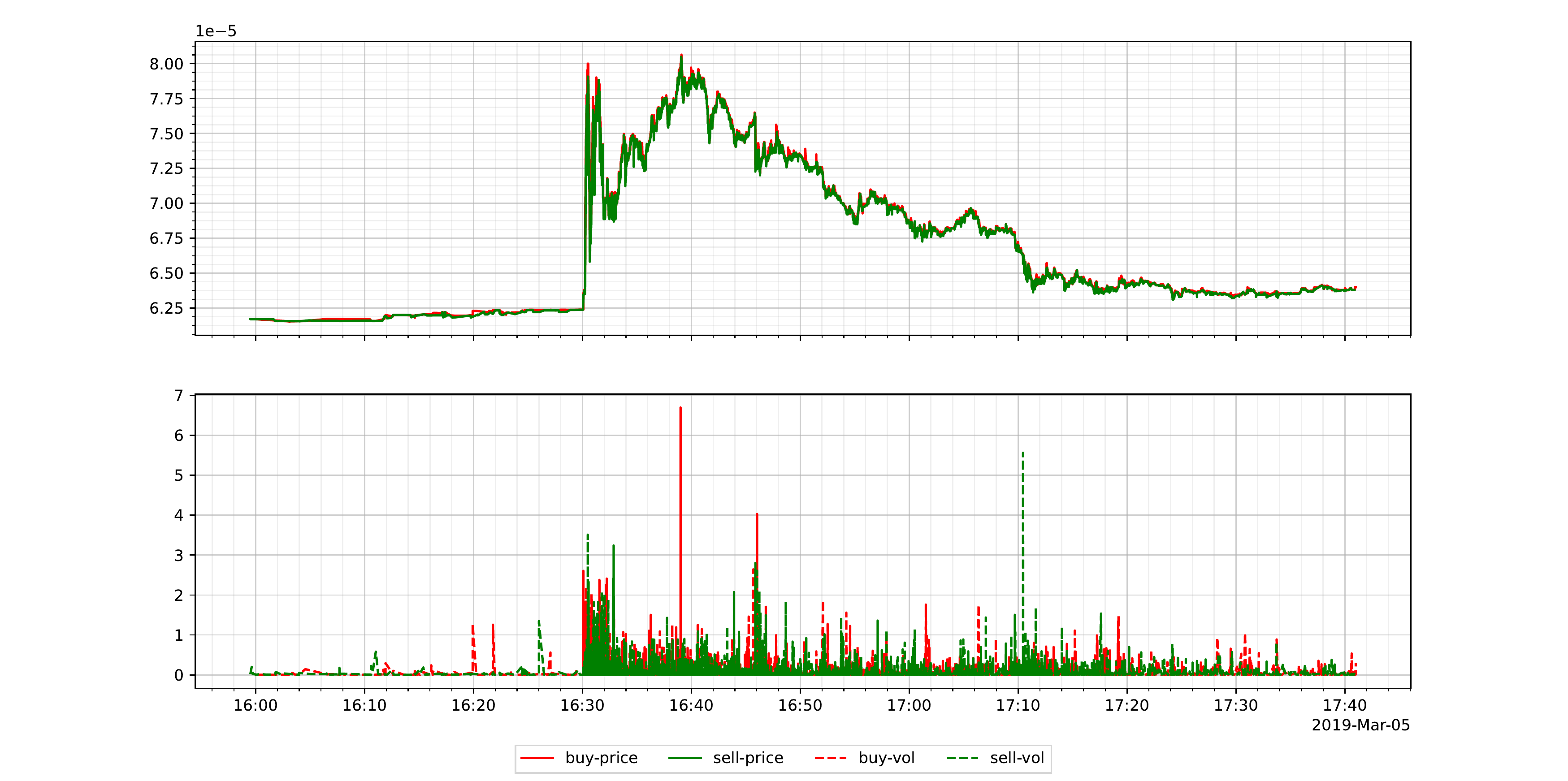}}
\end{figure}

\subsection{PD detection}
Our methodology for detecting the PD attempts has two main steps. 
As a first step, we use the data described in the Section \ref{sec: data analysis} and we transform it into our features set. 
In the next step, we use this features set together with the commonly used anomaly detection algorithms to produce our final predictions.

\subsubsection{Feature set generation}
\label{sec: feature set generation crypto}
We follow the methodology developed in Section \ref{sec:methods} to transform the time series data described in Section \ref{sec: data analysis} into the features which can be used by the standard anomaly detection algorithms.
Let 
\begin{equation*}
    D_c \subset \mathbb{R}^{n \times 4}
\end{equation*}
denote the vector of time series composed of trades' timestamp, side, price, and volume for a coin $c$. 
Here $n$ denotes the number of trades during the pre-specified time interval around the PD attempt which is chosen to be 1.5, 3 or 7 days.
We split the data set $D_c$ into $\ceil{n/o}$ subsets of $w$ trades, where $o$ denotes the offset of how many trades to move forward in time. 
In other words, we identify the trade-window specified by $w$ trades and then we move on shifting among the listed trades.
\begin{equation*}
    D_c = \bigcup_{i=1}^{\ceil{n/o}} D^i_c.
\end{equation*}

Each subset $D^i_c$ is then transformed into our feature set by calculating either the exact signature or the randomized signature. 
Let $\Tilde{R}^i_c$ and $R^i_c$ denote the randomized and exact signatures, respectively. 
This process is repeated for all the coins in the data set so that we construct our final feature set
\begin{align*}\label{eq: feature set}
    \Tilde{\mathbf{R}} &= \bigcup_c \bigcup_i^{\ceil{n_c/o}} \Tilde{R}^i_c, \\
    \mathbf{R} &= \bigcup_c \bigcup_i^{\ceil{n_c/o}} R^i_c,
\end{align*}
where the number of trades $n_c$ depends on the coin $c$.  
In this methodology, the offset $o$ and the window size $w$ can be seen as hyper-parameters which can be optimized to improve the final results. 

Note that especially the offset $o$ can have a big impact on the run time as the number of samples per coin will be $\ceil{n/o}$. (Our numerical experiments show that choosing $w=100$ and $o=5$ yields a good balance between the run time and precision.

Including the time dimension,  which is clearly non-decreasing, in the features set also helps to obtain better behaved signatures, in the sense that the signatures in this case are able to uniquely determine the original time series (see Lemma 2.14 and Lemma 4.8 in \cite{learningFromPast2016} for a proof\footnote{Essentially, we are avoiding tree-like paths (trajectories that cancel out by themselves as controls).}). The hyper-parameters used for the feature generation from cryptocurrency data can be found in table \ref{tab: hyper paramaters simulated}.

\begin{remark}
    Actually, since we are working with finite data on finite intervals, we can use Fubini's theorem to exchange the order of integration inside the signature terms.
    In particular, we can move the integral on time as if it were the last integration.
    This shows that every signature will clearly encode information generated by paths integrated over time every particular time interval.\\
    For this reason, it is important to note that the time intervals identified by the first and last trades of the selected window are not uniform throughout our analysis.
    In fact, because trades are not equidistant in time, each sample could potentially represent a different time length.
    Working with a heterogeneous grid of time is not easy a priori (and this is why \cite{LaMorgia2021} first transform the data into an equidistant time grid. 
    However, the non-equidistant time grid does not pose a problem to our algorithm, because by including the timestamp in the data set, the correct time scale is reflected in the calculation of the randomized and exact truncated signatures.
\end{remark}

\subsection{Anomaly detection algorithms}

Anomaly detection involves being able to distinguish whether an element of the sample belongs or not to the same distribution which generated the (greatest part of the) sample.
Elements of this latter group are usually called \emph{inliers}, while elements of the former are commonly referred to as \emph{outliers}.
The task is of course difficult because we cannot rely on visual inspection for multi-dimensional data and we do not have a priori a clear description of the ``exact'' distribution and its samples.
In the literature there are a number of anomaly detection algorithms, each with its own pros and cons,
but most commonly used algorithms share the following steps:
\begin{enumerate}
    \item Give a feature set $F\subset \mathbb{R}^{n\times m}$ consisting of $n$ samples and $m$ features, each sample is interpreted as a point in $\mathbb{R}^m$.
    \item Using some metric, for each such point the distance to neighboring points is calculated.
    \item Based on the density of points, a 1-dimensional anomaly score is calculated where either higher or lower values correspond to more abnormal observations depending on the exact algorithm.
    \item Lastly, based on a predetermined contamination rate, a cutoff is set and all samples with an anomaly score higher or lower than the cutoff are declared as anomalies.
\end{enumerate}
In our case study, we use the robust covariance and isolation forest anomaly detection algorithms.
These methods are implemented in the Sklearn Python package (\cite{scikit-learn}). 
We refer the reader to \cite{Rousseeu1999} and \cite{TonyLiu2008} for more details about the used algorithms.

\subsubsection{Robust covariance}
The idea behind the first one is quite simple: the goal is to compute elliptical contours inside which inscribing the inliers. 
The ellipsoid is constructed through a robust covariance matrix and has form:
\begin{equation*}
    E(\Sigma, \mu, \rho) = \left\{ x\in \RR^m \,:\, (x-\mu)'\Sigma^{-1}(x-\mu) \leq \rho^2 \right\}.
\end{equation*}
The robust covariance matrix is built using Minimum Covariance Determinant (MCD) estimator whose goal is finding which elements in the complete sample yield the empirical covariance matrix with the smallest determinant, suggesting a ``regular''  subset of observations from which the location and covariance matrix are preserved.
In particular, we need to fix a fraction $0<\gamma<1$ and consider a subsample of observations (hopefully, inliers), i.e.~$I \subset \{O_1,\dots, O_n\}$ that contains $\ceil{n\gamma}$ points.
Subsample mean and covariance are given by
\begin{align*}
    \widehat{\mu_I} &= \frac{1}{|I|}\sum_{O_i \in I} O_i,\\
    \widehat{\Sigma_I} &= \frac{1}{|I|}\sum_{O_i \in I} (O_i-\widehat{\mu_I})(O_i-\widehat{\mu_I})'.
\end{align*}
The ellipsoid determined by points in $I$ is $E(\widehat{\Sigma_I}, \widehat{\mu_I}, \widehat{\rho_I})$, where $\widehat{\rho_I} = \inf_{r>0} \PP_I(E(\widehat{\Sigma_I}, \widehat{\mu_I}, r)\geq \gamma)$, with $\PP_I$ being the empirical measure associated to $I$.
Finally, Mahalanobis distance with the robust covariance is used to rank outliers. 
The algorithm is looking for the covariance matrix with lowest determinant because the volume of the ellipsoid is proportional to the square root of the determinant.
The MCD algorithm also has a high breakdown point, which is equal to $[(n-m+1)/2]$.\\
Inconveniences of this method are the fact that data are supposed to belong to an elliptical distribution and that to compute the covariance matrix (in a sufficiently reliable way) we need the number of samples larger than the square of the number of features ($n>m^2$)\footnote{This condition is actually requested by the algorithm in the Sklearn package.}, which imposes another condition to the truncation degree used in our algorithm, but which can easily accounted for in case of randomized signatures.\\
The algorithm was mainly developed by Rousseeuw, starting from \cite{Rousseeuw1985} and it became popular in 1999, thanks to \cite{Rousseeu1999}, where a faster version of the same procedure was formulated. 
For extensions and other results, we suggest \cite{MCD_Extensions} (and references therein).

\subsubsection{Isolation forest}
The second anomaly detection algorithm we use is called isolation forests and was first published in \cite{TonyLiu2008}. 
The idea of this algorithm is to recursively partition the data until either all unique points are separated or a predefined limit is reached. 
The recursion of partitions is expressed as a \emph{proper binary tree}, where each node has either two or zero descendants. 
\begin{definition}
The path length $h(x)$ of a point $x$ is the height of the point $x$ is the random tree.
\end{definition}
Intuitively points, which are anomalies, are easier to separate from the other points, which results in lower height scores $h(x)$. 
The reason for this is that anomalies lie outside of clusters, which is why a random partition will end up isolating the point more quickly and hence yield a lower height score $h(x)$. 
An isolation forest consists of a sample of these random trees where the height scores are averaged. The height $h(x)$ can not directly be used as an anomaly score, because the average height is given by $\log n$ where $n$ is the number of points in the input data. As such, the height can not be directly compared across different samples and a normalisation is required.
Given an input data set of $n$ points, we define
\begin{equation*}
    c(n) = 2H(n-1) - (2(n-1)/n)
\end{equation*}
where $H(i)$ is the harmonic number given by $\sum_{k=1}^{i} \frac{1}{k}$. The anomaly scores is then defined as 
\begin{equation*}
    s(x, n) = 2^{-\frac{E(h(x))}{c(n)}}
\end{equation*}
where $E(h(x))$ is the averaged height from the isolation forest. As discussed, anomalies should yield lower $E(h(x))$, which is why high scores of $s(x, n)$ indicate anomalies. Interestingly, isolation forest work better with \emph{smaller} data sets, which is why sub-sampling methods are used.
Note that this is directly contrary to the robust covariance method, where, for fixed number of features, more
points are preferable to construct the ellipsoid. 
This also give an intuitive reason, why isolation forests work well with both truncated exact signatures and random signatures, as is discussed in the subsequent section.

\subsection{Results}
\begin{figure}
  \caption{Example of labeled anomalies by signature (green diamond) and benchmark predictions (green circle), while in purple it is time window of the PD attempt.}
  \label{fig: sample1 pd detection}
  \includegraphics[scale=0.7]{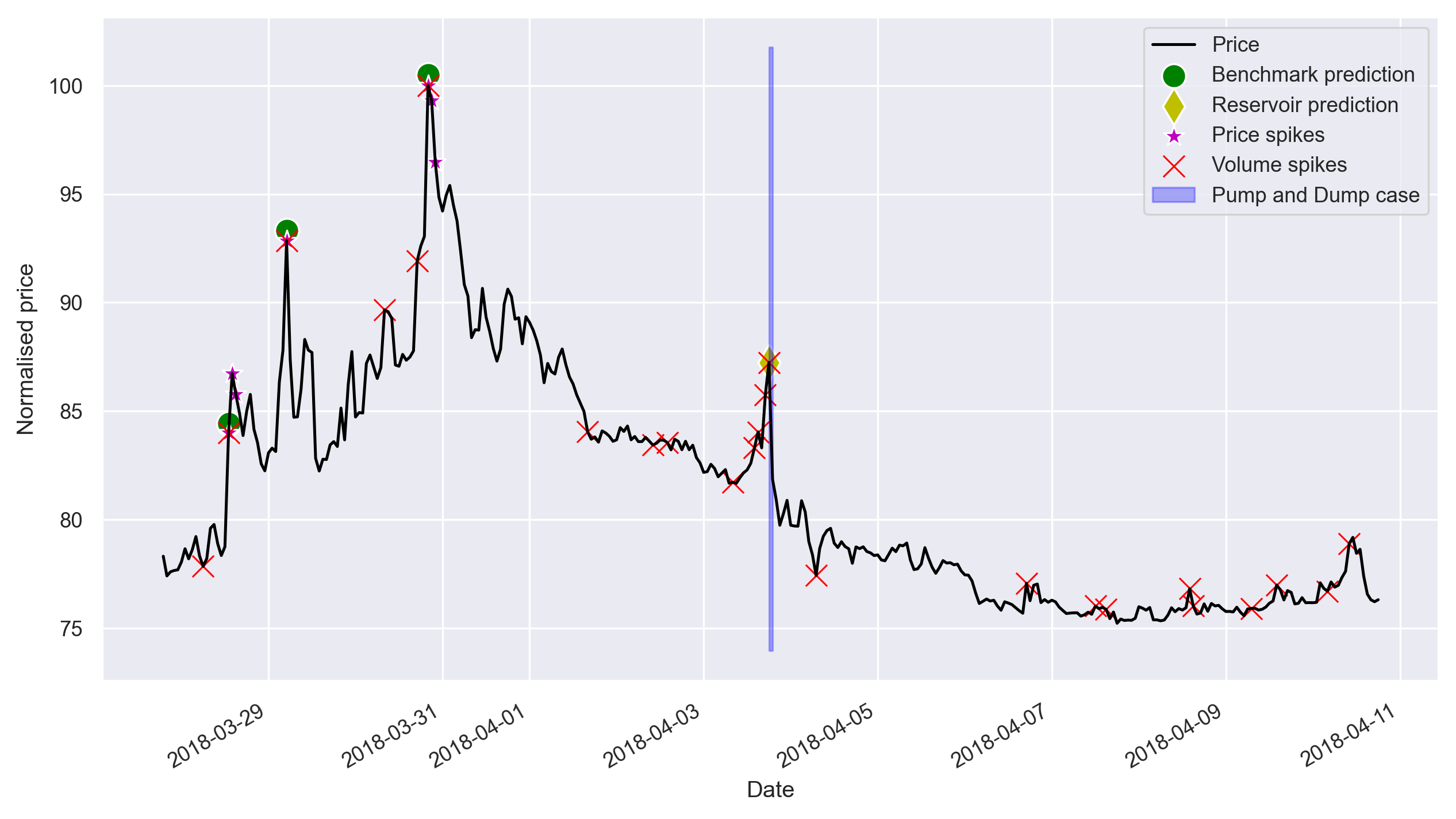}
\end{figure}

In order to evaluate the quality of our anomaly detection algorithm, we compare our predictions with the manually labeled PD attempts. 
We evaluate our predictions by considering 1.5, 3, and 7 days before and after each PD attempt. 
Choosing a longer time interval makes the predictions more useful but at the same time carries the risk of PD attempts to be absent from our data set.
As a benchmark, we use the anomaly detection method employed by \cite{Kamps2018}).
In this method, anomalies are labeled by considering volatility spikes and price spikes.
They describe the best configuration  setting as the volume spike threshold to be $V_{spike} = 300\%$ and the price spike threshold to be $S_{spike} = 105\%$. 
This means that time windows are labeled as anomalies, if they contain both a volume and price increase higher than the threshold compared to a moving average. In order to obtain a full precision-recall plot, we linearly interpolate the threshold $V_{spike}$ between $0\%$ and $300\%$ and $S_{spike}$ between $0\%$ and $105\%$. 
This also enables us to compare the algorithms on the full precision-recall plot while ensuring that the benchmark configuration is always considered. As the main metric to evaluate our results we consider the maximum of the $F_1$ scores across the precision-recall plot.

\begin{definition}[$F_1$ score]
    The $F_1$ score is defined as the harmonic mean between precision and recall, i.e.~$F_1 = \nicefrac{2}{(p^{-1} + r^{-1})}$. 
    It measures of the classifier's accuracy by merging precision and recall in one metric and its values range between 0 (worst case) and 1 (base case). 
    Since it is defined as a harmonic mean, it gives more weight to smaller ``values'', thus a classifier has a high $F_1$ score only when both precision and recall are high.
\end{definition}

Following \cite{Kamps2018}, we aggregate our results to hourly windows. Figure \ref{fig: sample1 pd detection} shows an example where the predictions using signatures produce more accurate results as there are less false positives compared to the benchmark. This result is not surprising as the benchmark model produces many false positives since it does not quantify by how much the volatility and price thresholds are breached. Additionally, the benchmark prediction was not able to detect the real PD attempt, as the price spike was not high enough to breach the threshold.

Figure \ref{fig: sample2 pd detection} shows a less successful example where, even though the reservoir prediction correctly labels PD attempts, it also incorrectly labels other time windows as PD attempts. 
Visual inspection of the price plot shows that both the reservoir prediction and the benchmark prediction incorrectly marks another significant price spike as a PD attempt. 
Additionally, there is also a wrong classification by the reservoir prediction around 21\textsuperscript{st} July 2018 in the figure where an unusual price drop occurs. 
This can be explained by the fact that the reservoir prediction is an unsupervised learning method hence there is no way of learning that unusual price dip should not be considered as an anomaly. 

\begin{figure}
  \caption{Example of mislabelling by signature (green diamond) predictions close to the line denoting 21\textsuperscript{st} July 2018, while in purple the time window of the PD attempt. The other outlier identified by signatures before 23\textsuperscript{rd} July might be a correct guess that was not recorded in the dataset.}
  \label{fig: sample2 pd detection}
  \includegraphics[scale=0.7]{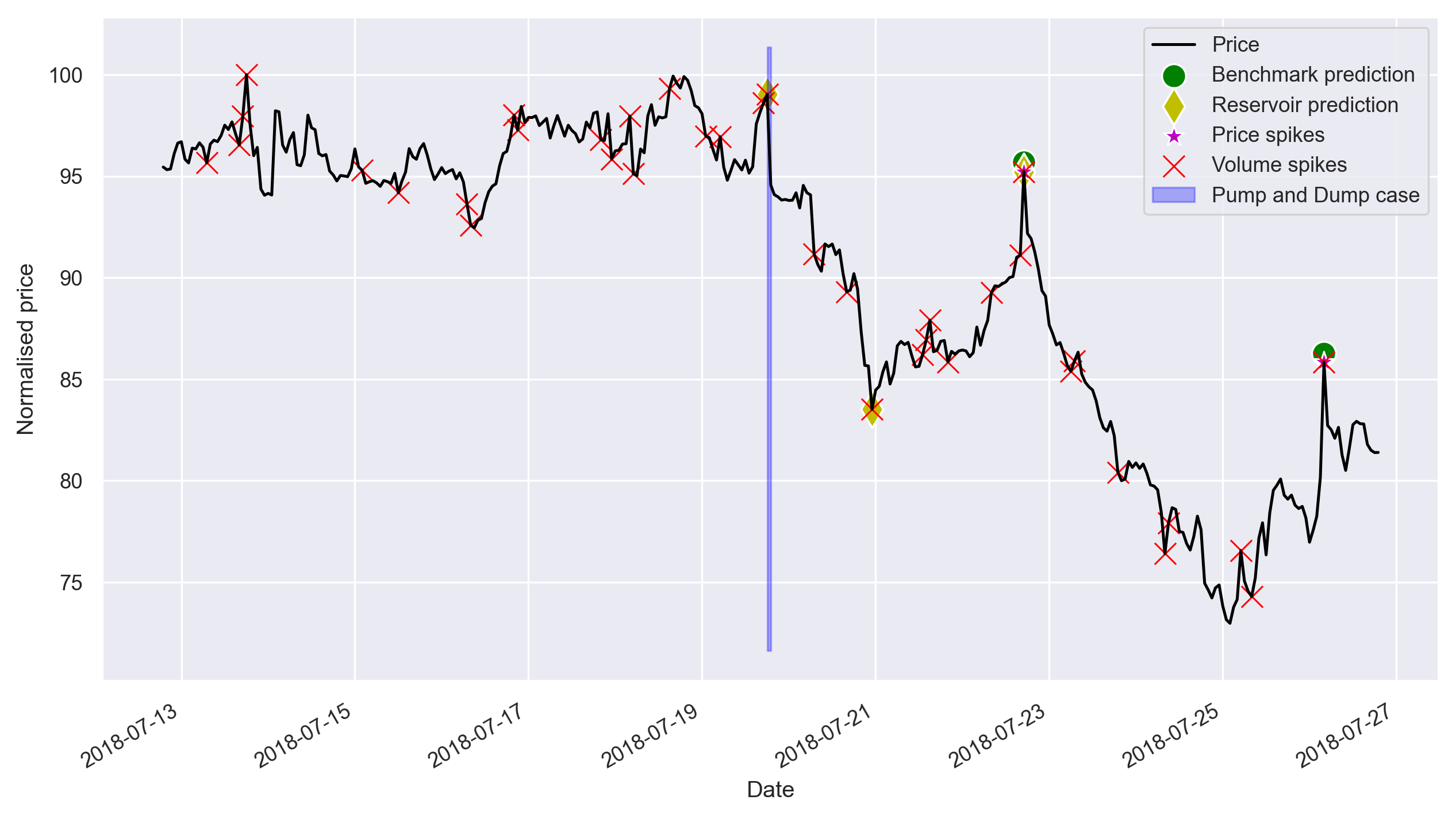}
\end{figure}

In the following, we present our results using the complete data set described in Section \ref{sec: data analysis}. 
In order to compare our results with the ones from \cite{LaMorgia2021}, we first consider 3 days interval for each PD attempt. 
As Figure \ref{fig: random reservoir pr 1.5} clearly shows both the isolation forest predictions and the robust covariance predictions using random signatures as a feature set outperform the benchmark model for almost every given recall.
\begin{figure}[H]
  \caption{Results for random signatures for both the isolation forest and robust covariance methods. The figures show precision-recall plots for the random signature predictions. The vertical lines show the respective recall values, where the maximum of the $F_1$ scores was achieved.}
  \includegraphics[width=0.9\textwidth]{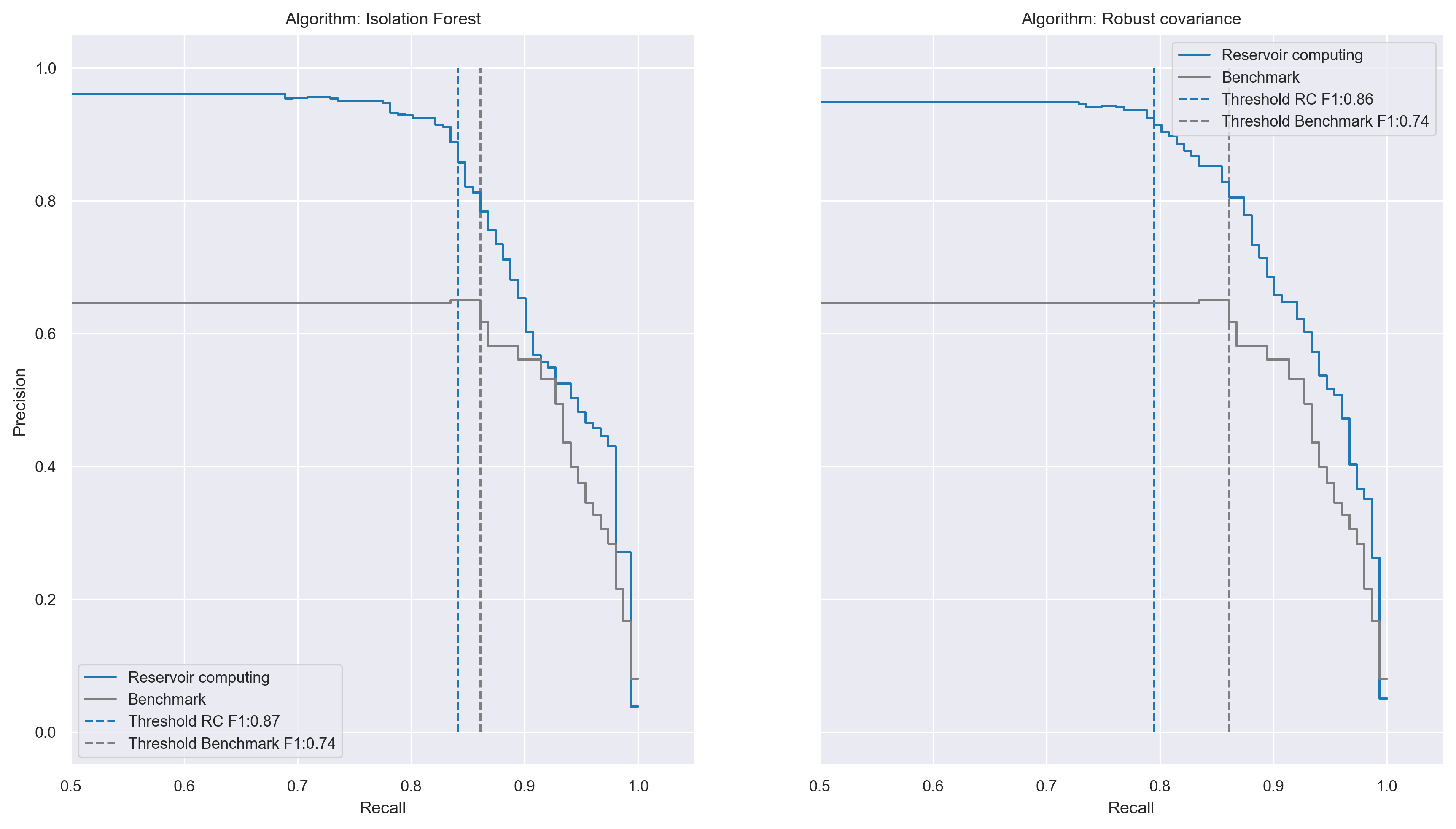}
  \label{fig: random reservoir pr 1.5}
\end{figure}

Figure \ref{fig: exact reservoir pr 1.5} shows, that the isolation forest predictions using exact truncated signatures shows similar results to the ones obtained using random signatures. On the other hand, the robust covariance predictions using exact truncated signatures under-performs even compared to the benchmark. The problem with the robust covariance method using exact signatures features is a consequence of the matrix bad condition number which shows warnings during the training step.
It is interesting to note that we did not observe the same result for random signatures and this can be explained by the fact that randomness can act as a method for implicit regularisation. 
Two possible references for this behavior are \cite{implicitreg}, in the context of shallow neural networks, and \cite{jacot2020implicit}, for random features extracted by a Gaussian process used for kernel regression.
Explicit regularisation of the exact reservoirs might help with the bad condition number but experiments of this kind have not been performed by the authors.
\begin{figure}[h]
  \caption{Results for exact truncated signatures for both the isolation forest and robust covariance methods. Compared to figure \ref{fig: random reservoir pr 1.5}, the robust covariance method algorithm significantly worse while the isolation forest algorithm shows similar results.}
  \label{fig: exact reservoir pr 1.5}
  \includegraphics[width=0.9\textwidth]{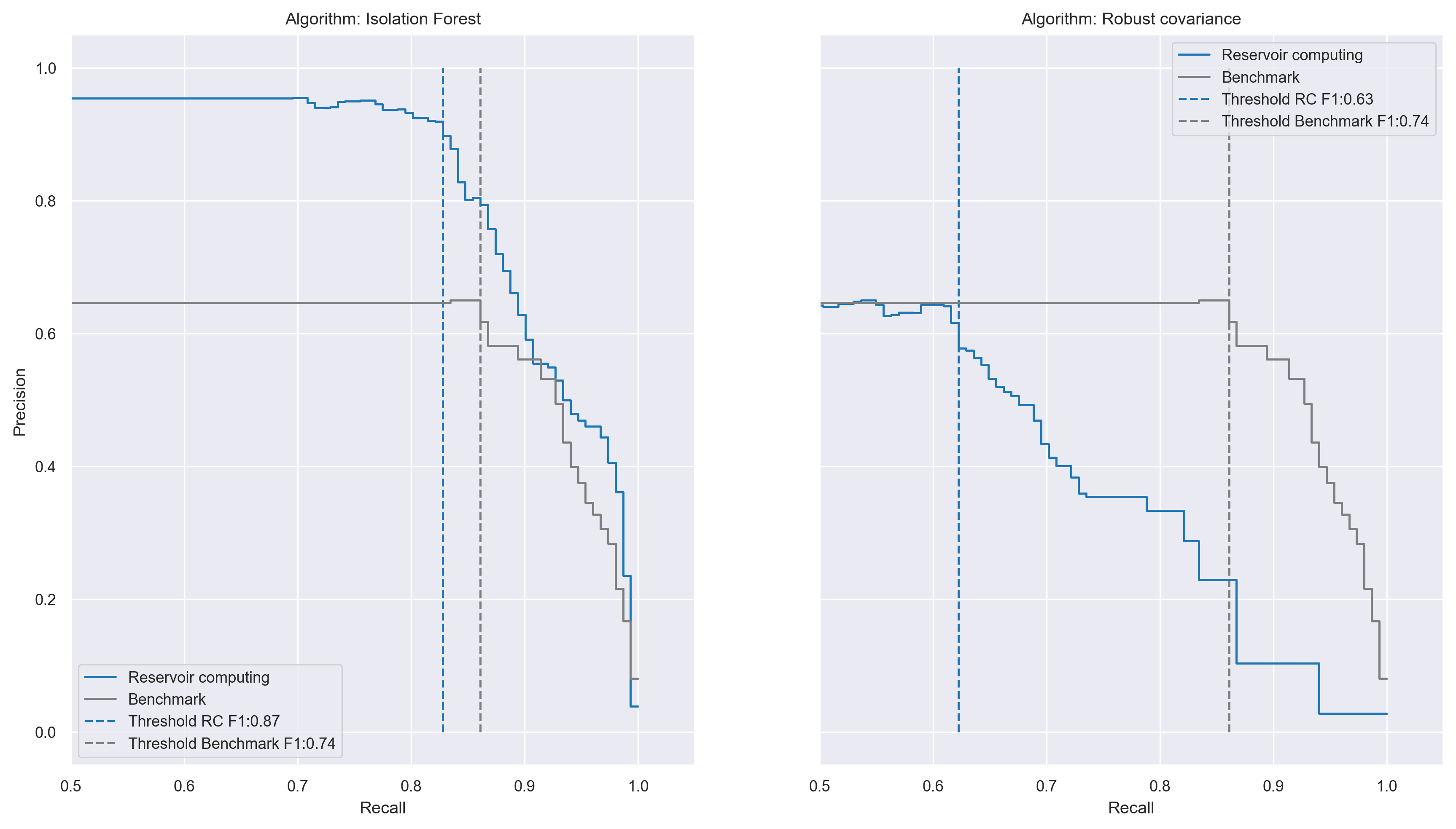}
\end{figure}

The fact that results for isolation forest show comparable results suggests that random signatures and truncated exact signatures contain a  similar amount of information. This is unsurprising, as the randomized signatures are an approximation of the exact signatures in a lower dimension and should therefore approximate the information contained in the exact signature.

Table \ref{tab: full results pd } shows the full results obtained in our experiments. As expected, for every classifier, the results get worse the detection duration increases. The main reason for this is that the chance of us missing PD attempts in our data set increases and also there are more unusual market behaviors which makes the classification harder in general. 

For isolation forests, both the randomized signatures and the truncated signatures show similar results for each time window. As previously discussed, the exact signatures show poor performance for the robust covariance algorithm.

Excluding the results for using exact truncated signatures with the robust covariance algorithm, our results outperform the results in \cite{Kamps2018} for all time windows. Due to the complexity of their method, we did not fully replicate the results described in \cite{LaMorgia2021} and hence we are only able to compare the results for the $\pm$ 1.5 days time window. Even their results are significantly better in terms of $F_1$ score, our best classifier comes relatively close. Here we would like to emphasize that the classifier used in \cite{LaMorgia2021} is based on supervised learning, while our classifier and the one in \cite{Kamps2018} are based on unsupervised learning.

Compared to supervised learning, our methodology has the big advantage, that there is no need for labeled training data and can hence be used without the need for manually collecting pump and dump cases. Additionally, given the generic nature of our method, it can potentially also be employed for the detection of other kinds of market anomalies.

\begin{table}[H]
\small
\centering
{
\caption{Anomaly detection performance}
\begin{tabular}{cccccc}
\toprule
    Classifier & Days & Precision & Recall & $F_1$\\
 \hline
Kamps et al. & 3&65\%&86\%&74\%\\
& 6&48\%&86\%&62\%&\\
& 14&29\%&81\%&43\%\\
    \hline
    La Morgia et al.\footnote{In their methodology, the authors use cross validation in order to compute the out of sample performance. As our methodology does not have use a training / test set split, results of might vary slightly even though the same overall data set is used. Therefore, the results here are only of an indicative nature.} & 3&98\%& 91\%& 95\%\\
& 6&-&-&-\\
& 14&-&-&-\\
        \hline
Randomized signature (Isolation forest) & 3&93\%&83\%&88\%\\
& 6&82\%&83\%&82\%\\
& 14&75\%&75\%&75\%\\
    \hline
Randomized signature (Robust covariance) & 3&80\%&94\%&86\%\\
& 6&80\%&84\%&82\%\\
& 14&75\%&75\%&75\%\\
    \hline
Exact truncated signature (Isolation forest)& 3&83\%&92\%&87\%\\
& 6&81\%&82\%&81\%\\
& 14&71\%&74\%&72\%\\
    \hline
Exact truncated signature (Robust covariance)& 
3&61\%&61\%&60\%\\
& 6&61\%&47\%&54\%\\
& 14&58\%&32\%&44\%\\

\bottomrule
\end{tabular}
\newline
\label{tab: full results pd }

}
     {\raggedright Table \ref{tab: full results pd } shows the full results for our classifiers together with the results from \cite{Kamps2018} and \cite{LaMorgia2021}. The results from \cite{LaMorgia2021} were not replicated by the authors and are taken directly from their paper as comparison. Here for each methodology and detection duration, the maximum $F_1$ score together with the precision and recall score at which it was obtained are displayed.\par}
\end{table}

\section{Conclusion}\label{sec:conclusion}
In this article, we explore the usage of randomized and truncated signatures as a feature set for anomaly detection purposes. 
As an empirical application, we first show how to detect real and fake trajectories of stock prices which are indistinguishable by visual inspection. 
Second, we investigate pump and dump schemes in the cryptocurrency ecosystem. 
Our analysis shows that both randomized and truncated signatures yield promising results as non-linear inputs for well established anomaly detection algorithms.
In particular, our work reveals that the use of signatures can significantly improve anomaly detection algorithms purely based on price and volume spikes. 
Furthermore, given the remarkable level of precision and recall achieved by our unsupervised learning algorithm, which are comparable to the results obtained from supervised learning, we are able to show that all necessary information to identify the pump and dump operations in the cryptocurrency market is already inherited in the historical time series of the related trades. 
More specifically, historical values of trade price, volume, date\&time, and side information are the only inputs in our machine learning algorithm.

Our findings also have important implications for the market efficiency. 
Clearly, cryptocurrency  markets attract increasing attention from the investment community due to the recent advances in their underlying technology. 
Cryptocurrencies now constitute an important asset class both for researchers and traders. 
In particular, the introduction of related derivative products will complete the market; enlargement by liquidity pools and non-fungible tokens will help to increase market efficiency. 
Since our work shows that unsupervised learning algorithms, which only use publicly available information to detect anomalies, achieve a similar performance as the supervised learning algorithms, market efficiency is supported. 
In fact, this carries an important message both for market designers and regulators since they can use our work to design or improve the market conditions so that certain types of anomalies do not appear. 
Furthermore, we also show that signatures are very successful in extracting valuable features from the input data hence this can also be used by traders to develop successful investment strategies.

Last but not least, the small difference between the results obtained by truncated signatures and randomized signatures empirically confirms the theory developed in Section \ref{sec:methods}. 
As the tasks explored in this article are relatively low dimensional, it is still feasible to compute exact signatures up to a relatively high degree. 
As the number of coefficients for the exact signatures grows exponentially in the degree (and polynomially in the path dimension), higher dimensional problems can make the computation of exact signatures infeasible. 
For these sorts of problems, for example in \cite{dyadic}, dyadic signatures have been used, which approximates higher order signatures using lower order signatures on path segments. 
In this article, we show that randomized signatures can be an alternative method to approximate the exact signatures. 
This, in turn, also serves as a robustness check for our newly developed methodology of randomized signatures. 

As future work, it would be interesting to preprocess the data to reflect the prior knowledge that we have about PD attempts by prescribing a certain kind of pattern as the one described in Section \ref{sec: pd process}. 
Hence, our algorithm can be improved by processing the input data to reflect the prior knowledge. 
Other possible future directions for this research are to investigate whether the methods can be improved by considering multiple currency pairs as inputs, which would also take correlations into account, or to modify outlier detection algorithms with new and more sophisticated variants.

\newpage
\bibliographystyle{elsarticle-harv}\biboptions{authoryear}
\bibliography{rsignature.bib}

\listoffigures

\end{document}